\let\proof\@undefined
\let\endproof\@undefined
\newcommand{\cX}{{\mathcal X}}
\newcommand{\cC}{{\mathcal C}}
\newcommand{\cT}{{\mathcal T}}
\newcommand{\contains}{{\rightarrow_{\cC}}}
\newcommand{\notcontains}{{\not \rightarrow}_{\cC}}
\newtheorem{observation}{Observation}
\begin{document}
\title{A simple fixed parameter tractable algorithm for computing the hybridization number
of two (not necessarily binary) trees}
\author{Teresa Piovesan, Steven Kelk}
\institute{Department of Knowledge Engineering (DKE), Maastricht University,\\ 
P.O. Box 616, 6200 MD Maastricht, The Netherlands\footnote{Corresponding author is Steven Kelk, steven.kelk@maastrichtuniversity.nl.}
}
\maketitle

\begin{abstract}
Here we present a new fixed parameter tractable algorithm to compute the
hybridization number $r$ of two rooted, not necessarily binary phylogenetic trees on taxon set
$\cX$ in time $(6^r  r!) \cdot poly(n)$, where $n=|\cX|$. The novelty of this
approach is its use of \emph{terminals}, which are maximal elements of a natural
partial order on $\cX$, and several insights from the softwired clusters literature. This yields a surprisingly simple and practical bounded-search algorithm and offers an alternative perspective on the
underlying combinatorial structure of the hybridization number problem.
\end{abstract}

\section{Introduction}

The rooted phylogenetic tree (henceforth, tree) is the traditional model for modelling the evolution of a set of species (or, more generally, \emph{taxa}) $\cX$ (see e.g. \cite{MathEvPhyl,reconstructingevolution,SempleSteel2003}). A rooted phylogenetic
network (henceforth, network) is a generalisation from trees to directed acyclic graphs which allows reticulate evolutionary phenomena such as hybridization, recombination and horizontal gene transfer to be incorporated (see Figure \ref{fig:network}). For detailed background information on networks we refer the reader to \cite{husonetalgalled2009,HusonRuppScornavacca10,surveycombinatorial2011,twotrees,Nakhleh2009ProbSolv,Semple2007}.

\begin{figure}[h]
\centering
\includegraphics[scale=0.2]{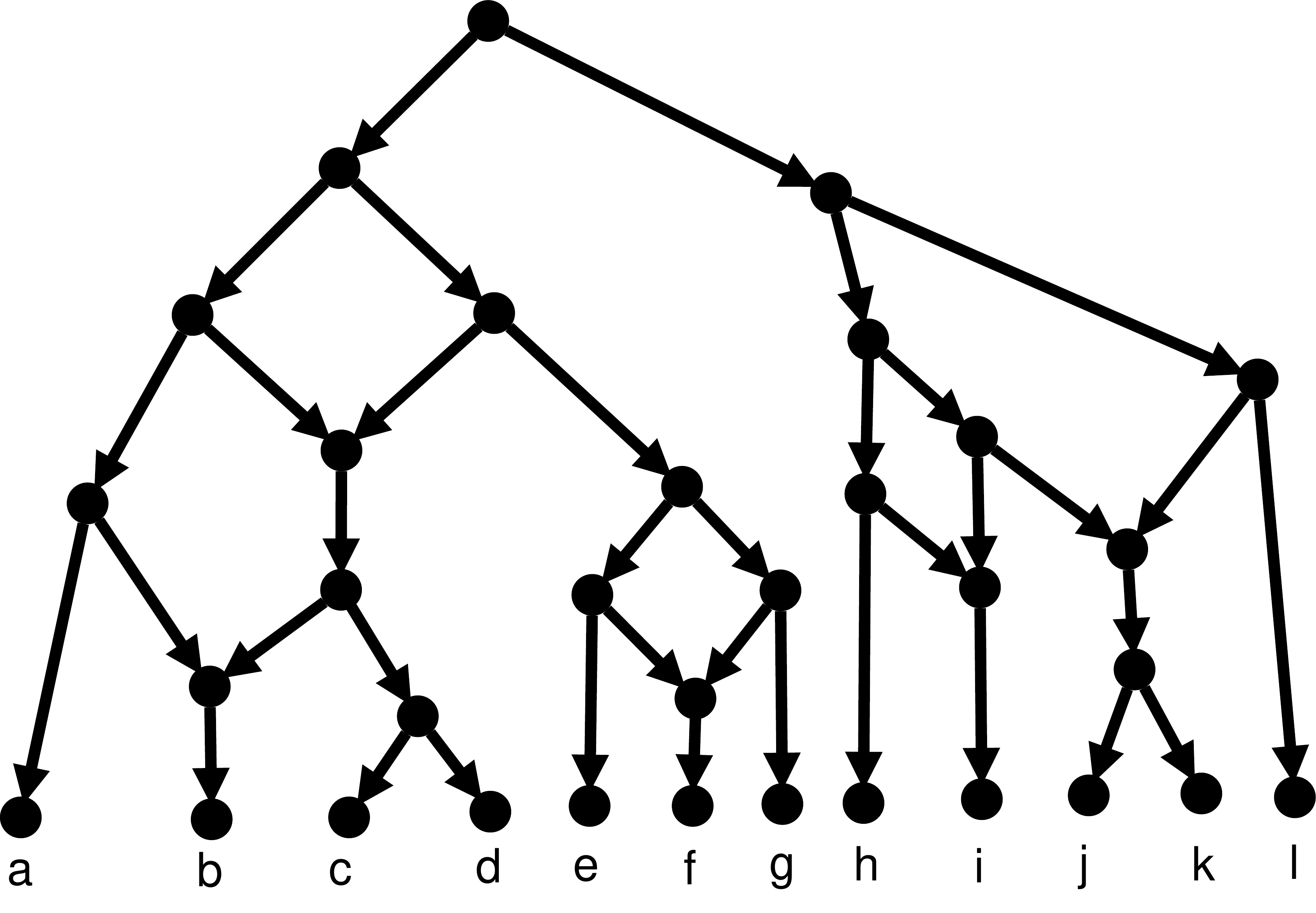}
\caption{An example of a (binary) rooted phylogenetic network on $\cX = \{a, \ldots, l\}$. This network has five reticulation nodes.}
\label{fig:network} 
\end{figure}

One use of networks, motivated in particular by the need to merge a set of discordant gene trees into
a species network \cite{Nakhleh2009ProbSolv}, is the following. Given a set of trees $\cT$, where each tree $T \in \cT$ has the same set of taxa $\cX$, construct a ``most parsimonious'' network which displays all the trees in $\cT$. If we define ``most parsimonious'' to mean: has as few reticulation nodes (i.e. nodes with indegree two or higher) as
possible, we obtain the \emph{hybridization number} problem \cite{baroni05,BaroniEtAl2004}. There has been extensive research into perhaps the simplest possible variant of this problem; this
is when
$\cT$ contains two binary (i.e. fully resolved) trees. Unfortunately, even this stylized version of the problem is computationally difficult; it is NP-hard and in a theoretical
sense difficult to approximate well \cite{bordewich07a,approximationHN}. On the other hand, there has been considerable progress in developing fixed parameter tractable (FPT) algorithms for the problem. Essentially, these are algorithms which can determine whether the hybridization number of two trees is at most $r$ in time $f(r) \cdot poly(n)$, where $n = |\cX|$ and $f(r)$ is a function
that does not depend on $n$ (see \cite{Flum2006} for an introduction to fixed parameter tractability). The idea of such algorithms is that, by decoupling $n$ and
$r$, the running time of the algorithm tends to grow more slowly than algorithms with
a running time of the form $O( n^{f(r)} )$. The first such algorithms were described in \cite{bordewich2,quantifyingreticulation} and the current theoretical state-of-the art is an algorithm with running time $(3.18^{r}) \cdot poly(n)$ \cite{whiddenFixed}. There are also
a number of very fast software packages in existence that are wholly or partially based on insights from fixed parameter tractability \cite{fastcomputation,hybridnet}.

However, what if $\cT$ contains more than two trees and/or contains trees that are not fully resolved? Algorithms to compute the hybridization number of such $\cT$
are necessary, because this more accurately reflects the type of trees that emerge in applied phylogenetics \cite{davidbook}. In this article we are interested in the situation when $\cT$ contains two not necessarily fully resolved trees on $\cX$. (We henceforth refer to such trees as \emph{nonbinary}, noting that this classification includes binary trees as a special case). Given that this problem is a generalisation of the binary case, it inherits all the negative results from that case, but not necessarily the positive results. Indeed,
there are far fewer \emph{positive} results for nonbinary. A number of non-trivial technicalities arise because in the nonbinary case we only require that the network displays \emph{some refinement} of each tree i.e. the image of the tree contained in the network can be more resolved than the original tree \cite{linzsemple2009}. This is a natural and desirable definition given that biologists often use nodes with outdegree 3 or higher in trees to denote \emph{uncertainty}, rather than a hard topological constraint. 

Recently there have been two non-FPT algorithms implemented (both of which are available in the package \textsc{Dendroscope} \cite{Dendroscope3}) to solve the nonbinary problem in polynomial time when the hybridization number is bounded \cite{autumn,cass}. The nonbinary problem is, furthermore, FPT. This was established in \cite{linzsemple2009} using kernelization. Unfortunately, mainly due to the very idiosyncratic behaviour of \emph{common chains} in the nonbinary case, the analysis given in \cite{linzsemple2009} is rather long and complex, and the (weighted) kernel they describe is also rather large, containing at most $(89r)$ taxa; the size of the unweighted kernel is quadratic in $r$. As far as we are aware the algorithm in \cite{linzsemple2009} has not been implemented.

In this article we present an alternative FPT algorithm for nonbinary that is based on bounded-search rather than kernelization, with running time $(6^r r!) \cdot poly(n)$. The resulting algorithm is extremely simple and amenable to implementation (it manages to completely avoid the concept of chains) and the analysis of correctness is comparatively straightforward. The algorithm builds heavily on a number of basic results from the \emph{softwired cluster} literature \cite{husonetalgalled2009,HusonRuppScornavacca10}, in particular \cite{journalElusive}. This literature concerns a slightly different methodology for constructing phylogenetic networks, but as observed in \cite{twotrees,journalElusive} the optima of the models synchronise in the case of two input trees, allowing results and concepts from one methodology to be used in the other.

The simplicity of our new algorithm stems from a careful examination of a natural partial order (and its maximal elements, which we call \emph{terminals}) on $\cX$, which turns out to be closely linked to hybridization number. This partial order appeared earlier in \cite{fptclusters} and \cite{journalElusive} but was used in a slightly different way. Via the observations in \cite{journalElusive} the earlier (and more general) results in \cite{fptclusters} also imply an FPT algorithm via softwired clusters for the nonbinary case, but with an astronomical running time. The added value of the present article is that, by making heavy use of the fact that there only two trees in the input, we are able to obtain a significantly simplified and optimized algorithm that can actually be used in practice.

For completeness we have implemented a prototype version of the algorithm, available upon request. However, perhaps the best use of
the algorithm is to integrate it into existing, well-supported non-FPT algorithms for the nonbinary problem (such as the 2012 release of \textsc{Cass} \cite{blogCass,cass}) to bound their search space and to thus upgrade their status to FPT.


\section{Preliminaries}
\label{sec:prelim}
\subsection{Trees, networks and clusters}

\begin{figure}[h]
\centering
\includegraphics[scale=0.2]{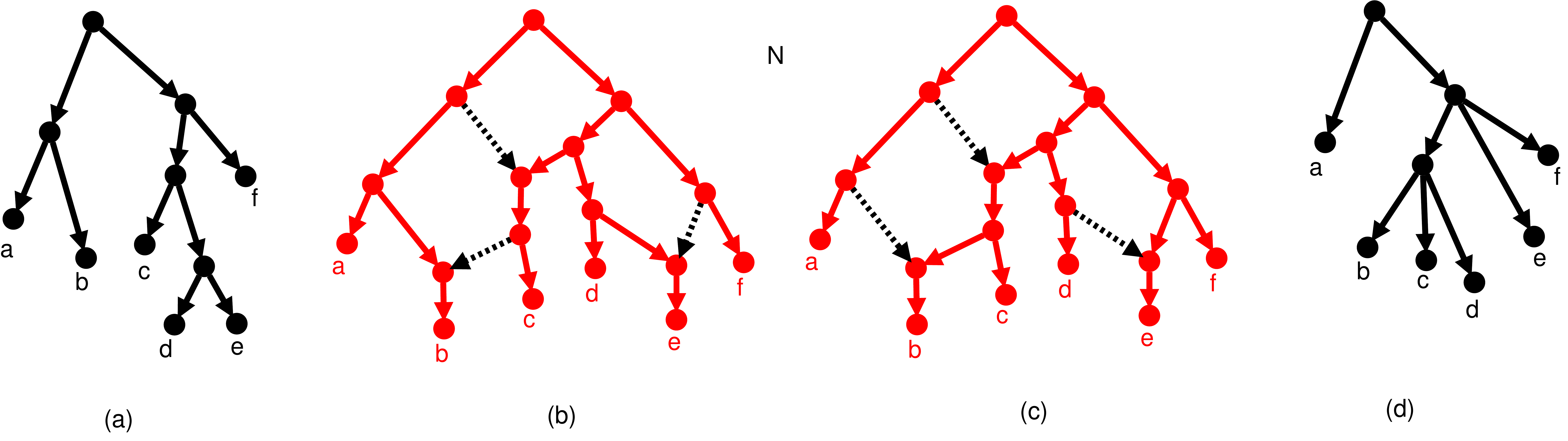}
\caption{In (b) we see that $N$ displays the tree in (a), and in (c) we see that $N$ displays a binary refinement of the tree in (d). The
dotted edges denote the reticulation edges that should be deleted to obtain the required tree.}
\label{fig:display} 
\end{figure}

Consider a set~$\mathcal{X}$ of taxa. A \emph{rooted phylogenetic network} (on~$\mathcal{X}$), henceforth \emph{network}, is a directed acyclic graph with a single node with indegree zero 
(the \emph{root}), no nodes with both indegree and outdegree equal to 1, and leaves bijectively labeled by~$\mathcal{X}$. The indegree of a 
node~$v$ is denoted~$\delta^-(v)$ and~$v$ is called a \emph{reticulation} if~$\delta^-(v)\geq 2$, otherwise it is called a \emph{tree} node. An edge~$(u,v)$ is called a 
\emph{reticulation edge} if its target node
$v$ is a reticulation. When counting reticulations in a network, we count reticulations with more than two incoming 
edges more than once because, biologically, these reticulations represent several reticulate evolutionary events. Therefore, we formally define the \emph{reticulation number} of a 
network~$N=(V,E)$ as \[r(N) = \sum_{\substack{v\in V: \delta^-(v)>0}}(\delta^-(v)-1) = |E| - |V| + 1 \enspace.\]

A \emph{rooted phylogenetic tree} on $\mathcal{X}$, henceforth \emph{tree}, is simply a network that has reticulation number zero. 
We say that a network $N$ on $\mathcal{X}$ \emph{displays}
a tree $T$ if~$T$ can be obtained from $N$ by performing a series of node  and edge deletions and eventually by suppressing nodes with both indegree and outdegree equal to 1 (see Figure \ref{fig:display}). 
We assume without loss of generality that each reticulation has outdegree at least one. Consequently, each leaf has indegree one. We say that a network is \emph{binary} if
every reticulation node has indegree 2 and outdegree 1 and every tree node that is not a leaf has outdegree 2.

Proper subsets of~$\mathcal{X}$ are called \emph{clusters}, and a cluster $C$ is a \emph{singleton} if $|C|=1$. We say that an edge $(u,v)$ of 
a tree \emph{represents} a cluster $C \subset \cX$ if $C$ is the set of taxa descendants of $v$. A tree $T$ represents a cluster $C$ if it contains an 
edge that represents $C$. For example, the tree in Figure \ref{fig:display}(a) represents $\{c,d,e\}$ but not $\{d,e,f\}$. We say that  $N$ represents $C$ ``in the softwired sense'' if $N$ displays some tree $T$ on $\cX$ such that $T$ 
represents $C$. In this article we only consider the softwired notion of cluster representation and henceforth assume this implicitly.  A network represents a set of clusters $\mathcal{C}$ if it 
represents every cluster in $\mathcal{C}$ (and possibly more). 
For a set $\mathcal{C}$ of clusters on $\mathcal{X}$ we define $r(\mathcal{C})$ as $\min \{ r(N) | N \text{ represents } \mathcal{C} \}$, we refer to this as the \emph{reticulation number} 
of $\mathcal{C}$. 
We say that two clusters~$C_1,C_2\subset\mathcal{X}$ are \emph{compatible} 
if either~$C_1\cap C_2=\emptyset$ or~$C_1\subseteq C_2$ or~$C_2\subseteq C_1$, and \emph{incompatible}
otherwise. A set of clusters $\cC$ is compatible if all clusters in $\cC$ are mutually compatible.

\subsection{The equivalence of (maximal) common pendant subtrees and (maximal) ST-sets}

Let $T$ be a tree on $\cX$. We write $Cl(T)$ to denote the
set of clusters represented by edges of $T$, and for a set of trees $\cT$ on $\cX$ we write $Cl(\cT) = \cup_{T \in \cT} Cl(T)$. 
We say that a (binary) tree $T'$ on $\cX$ is a (binary) \emph{refinement} of $T$ if $Cl(T) \subseteq Cl(T')$ (see Figure \ref{fig:display}). We say two trees $T_1$ and $T_2$ on $\cX$ have
a \emph{common refinement} if there exists a tree $T'$ on $\cX$ such that $Cl(T_1) \cup Cl(T_2) \subseteq Cl(T')$, where the last condition is equivalent to saying
that the set of clusters $Cl(T_1) \cup Cl(T_2)$ is compatible. We say that a tree $T^{*}$ on  $\cX^{*} \subseteq \cX$ is a \emph{pendant subtree} of $T$ if there is a refinement $T'$ of $T$ such that $\cX^{*} \in Cl(T')$. Note that this definition does not depend on the topology
of $T^{*}$ so we can equivalently say that $\cX^{*}$ is a pendant subtree of $T$. A pendant subtree $\cX^{*}$ is \emph{non-trivial} if $|\cX^{*}| > 1$. Given two trees $T_1, T_2$ on $\cX$ we say that $\cX^{*} \subseteq \cX$
is a \emph{common} pendant subtree if $\cX^{*}$ is a pendant subtree of both $T_1$ and $T_2$ and $T_1|\cX^{*}$ and $T_2|\cX^{*}$ have a common refinement. (As usual $T|\cX'$ for $\cX' \subseteq \cX$ refers to the tree obtained by suppressing nodes with indegree and outdegree equal to 1 in the
minimal subtree of $T$ that connects all elements of $\cX'$). Note that our definition of common pendant subtree is consistent with \cite{linzsemple2009}, which we follow.

Given a set~$S\subseteq\mathcal{X}$ of taxa,  
 we use~$\mathcal{C}\setminus S$ to denote the result of removing all elements of~$S$ from each cluster
in~$\mathcal{C}$ and we use~$\mathcal{C}|S$ to denote~${\mathcal{C}\setminus (\mathcal{X}\setminus S)}$ (the restriction of~$\mathcal{C}$ to~$S$). Following \cite{journalElusive}, we say that a set~$S \subseteq \mathcal{X}$ 
is an \emph{ST-set} with respect to $\mathcal{C}$, if~$S$ is compatible with all clusters in $\mathcal{C}$ and any two clusters~$C_1,C_2\in\mathcal{C}|S$ are 
compatible.
 An ST-set~$S$ is 
\emph{maximal} if there is no ST-set~$T$ with~$S \subset T$. The maximal ST-sets are unique, partition $\cX$ and can be computed in polynomial time \cite{journalElusive}.

\begin{figure}[h]
\centering
\includegraphics[scale=0.2]{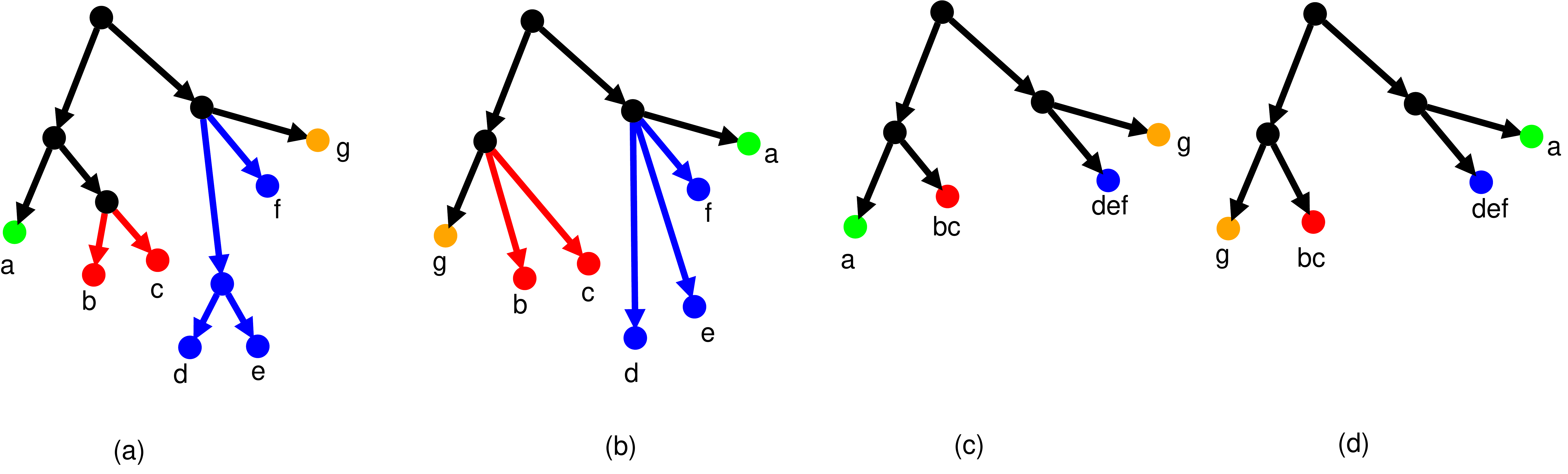}
\caption{The two trees shown in (a) and (b) have maximal ST-sets (i.e. maximal common pendant subtrees) $\{a\}, \{b,c\}, \{d,e,f\}, \{g\}$, shown
in colour. In (c) and (d) we show the result of collapsing the maximal ST-sets in (a) and (b) (respectively) into single taxa.}
\label{fig:subtree} 
\end{figure}

 In \cite[Lemma 6]{journalElusive} it is proven that, if $\cC = Cl(T_1) \cup Cl(T_2)$, and $\cX^{*}$ is an ST-set of $\cC$, then
for each $i \in \{1,2\}$, there exists a node $v_i$ of $T_i$ such that $\cX^{*}$ is exactly equal to the union of the clusters represented by some (not necessarily strict) subset of the edges outgoing from $v_i$. From this it follows that $\cX^{*}$ is a (maximal) ST-set of $Cl(T_1) \cup Cl(T_2)$ if and only if $\cX^{*}$ is a (maximal) common pendant subtree of $T_1$ and $T_2$. We will make heavy use of this equivalence and use the concepts interchangeably. In particular, all the maximal ST-sets 
of $\cC = Cl(T_1) \cup Cl(T_2)$ are singletons (in which case we say $\cC$ is \emph{ST-collapsed} \cite{journalElusive}) if and only if $T_1$ and $T_2$ have no non-trivial common pendant subtrees. A related operation is to create an ST-collapsed set of clusters by \emph{collapsing} all maximal ST-sets into single taxa as shown in Figure
\ref{fig:subtree}. Collapsing maximal ST-sets does not change the reticulation number of the set of clusters (because there always exists an optimal network in which the maximal ST-sets are ``pendant'' \cite[Corollary 11]{journalElusive}).


\subsection{The special case of (clusters obtained from) two trees}

Given two trees $\cT = \{T_1, T_2\}$ on $\cX$, we (again following \cite{linzsemple2009}) define $h(\cT)$ (the \emph{hybridization number} of $\cT$) as the smallest value of $r(N)$ ranging over all networks $N$ on $\cX$ such that $N$
displays a binary refinement of $T_1$ and a binary refinement of $T_2$. In \cite[Observation 9]{journalElusive} we note that the emphasis on \emph{binary} refinements
does not sacrifice generality. Furthermore, from \cite[Lemma 2]{twotrees} we may assume without loss of generality that in the definition of $h(\cT)$, $N$ can be restricted to being binary.
Observe that, if $\cT$ is an arbitrary set of trees on $\cX$, $r(Cl(\cT)) \leq h(\cT)$. This holds because if a network displays a (refinement of a) tree $T$ then it
certainly also represents all the clusters in $Cl(T)$. For $|\cT| > 2$ this inequality can be strict \cite{twotrees}.  
However, in \cite[Lemma 12]{journalElusive} it is proven that
if $\cT = \{T_1, T_2\}$ are two trees on $\cX$, and $\cC = Cl(\cT)$, then $r(\cC) = h(\cT)$. Unfortunately, even in this special case, if $N$ represents all clusters in $\cC$ it does not necessarily display (binary refinements of) $T_1$ and $T_2$ \cite{twotrees}. Fortunately a polynomial-time, reticulation-number preserving transformation
is possible, which we describe later in Section \ref{sec:opt}.

\section{The structure of optimal solutions}
\label{sec:opt}

We begin with some simple results which formalize the idea that, when $\cT$ contains exactly two trees, the problem has ``optimal substructure'' i.e. optimal
solutions can be constructed from arbitrary optimal solutions for well-chosen subproblems. We begin with a focus on clusters, but then explicitly link this to trees in Lemma \ref{lem:unified} and Corollary \ref{cor:clus2tree}.

\begin{observation} 
\label{obs:SBRexists}
Let $\cC = Cl(\cT)$ be a set of clusters on $\cX$, where $\cT = \{T_1,T_2\}$ is a set of two trees on $\cX$ with no
non-trivial common pendant subtrees, and $r(\cC) \geq 1$. Then there exists $x \in \cX$ such that $r( \cC \setminus \{x\} ) < r(\cC)$.
\end{observation}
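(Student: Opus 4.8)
The claim is that if two trees $\cT = \{T_1, T_2\}$ have no non-trivial common pendant subtrees (so $\cC = Cl(\cT)$ is ST-collapsed) and $r(\cC) \geq 1$, then some single taxon $x$ can be removed to strictly decrease the reticulation number. The natural approach is to take an optimal network $N$ representing $\cC$ with $r(N) = r(\cC)$ and locate a reticulation whose removal "costs" exactly one taxon. Since $r(\cC) \geq 1$, $N$ has at least one reticulation vertex $v$. I would look at a *lowest* reticulation — one all of whose descendant reticulations, if any... actually better: pick a reticulation $v$ that is lowest in the sense that no other reticulation lies strictly below it. Then the part of $N$ strictly below $v$ is a tree hanging off $v$.

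The key combinatorial step: let me examine the taxa strictly below $v$. Because the component below $v$ is a pendant subtree of the tree displayed by $N$ that realizes any given cluster, and $\cC$ is ST-collapsed, I want to argue that $v$ has exactly one taxon-child, i.e. $v$'s subtree is a single leaf $x$. If the subtree below $v$ contained two or more taxa forming a set $S$, I would try to show $S$ behaves like a common pendant subtree of $T_1$ and $T_2$ — using [Lemma 6, journalElusive] in the contrapositive direction (that the descendants of a reticulation in an optimal network correspond to an ST-set), contradicting ST-collapsedness. Concretely: since we may assume (by [Corollary 11, journalElusive]) maximal ST-sets are pendant, and $\cC$ is ST-collapsed, every such pendant set is a singleton; I would argue the taxon set below a lowest reticulation is contained in one such pendant class, forcing it to be a single leaf $x$. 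Then deleting one of the two reticulation edges into $v$ and suppressing degree-2 vertices yields a network $N'$ with $r(N') = r(N) - 1$ that still represents $\cC \setminus \{x\}$ (every cluster of $\cC$ not involving $x$ is still represented, since we only removed one of the two ways of reaching $v$, and $\cC$ restricted away from $x$ was already representable inside $N$). Hence $r(\cC \setminus \{x\}) \leq r(N) - 1 < r(\cC)$.

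The main obstacle I anticipate is making rigorous the claim that the taxa below a suitably chosen reticulation form a singleton, rather than a larger set that is "pendant" in $N$ but not a genuine common pendant subtree of $T_1, T_2$ in the refinement sense. The subtlety is that "pendant in one displayed tree" is weaker than "common pendant subtree of both input trees with a common refinement," which is exactly the $\cX^*$ notion in the preliminaries. I would handle this by choosing $N$ to additionally be ST-collapsed-friendly — e.g. take $N$ minimizing $r(N)$ and, among those, with the subtree below each lowest reticulation as small as possible — and then show that if the subtree below a lowest reticulation $v$ has more than one leaf, one can "push" a leaf out of that subtree or reroute edges to get a network of the same reticulation number but smaller such subtree, a contradiction. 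Alternatively, and more cleanly, invoke the equivalence from [Lemma 6, journalElusive] that ST-sets of $Cl(T_1)\cup Cl(T_2)$ are exactly common pendant subtrees, together with the structural fact that the leaf-set below a reticulation in an optimal network forms an ST-set; combined with ST-collapsedness this immediately gives a singleton. I would also need the routine check that suppressing the resulting degree-two vertices does not accidentally create a node with indegree and outdegree both $1$ that was not there before, and that the reticulation number drops by exactly one (it cannot drop by more, since each edge deletion lowers $r$ by at most one).
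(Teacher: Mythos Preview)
Your approach is essentially the paper's: take an optimal (binary) $N$, locate a lowest reticulation (the paper calls the pendant tree hanging below it an \emph{SBR}), argue its leaf set is an ST-set and hence a singleton $\{x\}$ by ST-collapsedness, then delete $x$ together with its reticulation parent to drop $r$ by one. The ``main obstacle'' you anticipate is not actually an obstacle: the leaf set $\cX'$ below an SBR is an ST-set for \emph{any} network representing $\cC$, directly from the definitions---since no reticulation lies below, every cluster $N$ represents is either disjoint from $\cX'$, contains it, or is contained in it, and $\cC|\cX'$ is just the cluster set of that pendant subtree---so no extremal choice of $N$ or appeal to Corollary~11 of \cite{journalElusive} is needed.
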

\begin{proof}
Consider without loss of generality a binary network $N$ which represents $\cC$, where $r(N)=r(\cC)$.
By acyclicity $N$ contains at least one \emph{Subtree Below a Reticulation} (SBR) \cite{journalElusive},
i.e. a node $u$ with indegree-1 whose parent is a reticulation, and such that no reticulation can be reached by a directed
path from $u$.  Let $\cX'$ be the set of taxa reachable from $u$ by directed paths. $\cX'$ is an ST-set, so $|\cX'|=1$ (because
$\cC$ is ST-collapsed). Let $x$ be the single taxon in $\cX'$. Deleting $x$ and its reticulation parent from $N$ (and tidying up
the resulting network in the usual fashion\footnote{Specifically, for as long as necessary
applying the following tidying-up operations until they are no longer needed: deleting any node with outdegree zero that is not labelled by an element of $\cX$; suppressing all nodes with indegree and outdegree both equal to 1;
replacing multi-edges with single edges; deleting nodes with indegree-0 and outdegree-1 \cite{journalElusive}. 
}) creates a network $N'$ on $\cX \setminus \{x\}$ with $r(N') < r(N)$ that represents $\cC \setminus \{x\}$.
\end{proof}

\begin{lemma} 
\label{lem:atmostone}
Let $\cC = Cl(\cT)$ be a set of clusters on $\cX$, where $\cT = \{T_1,T_2\}$ is a set of two trees on $\cX$ with no
non-trivial common pendant subtrees, and $r(\cC) \geq 1$. Then for each $x \in \cX$ it holds that $r(\cC) -1 \leq r( \cC \setminus \{x\} ) \leq r(\cC)$.
\end{lemma}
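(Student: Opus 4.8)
The plan is to prove the two inequalities separately, working throughout with the fact (\cite[Lemma 12]{journalElusive}) that $r(\mathcal{C}) = h(\cT)$ and, more generally, that $r\bigl(Cl(T_1|\cX') \cup Cl(T_2|\cX')\bigr) = h(\{T_1|\cX',T_2|\cX'\})$, where $\cX' = \cX \setminus \{x\}$. Since $\mathcal{C}\setminus\{x\}$ and $Cl(T_1|\cX')\cup Cl(T_2|\cX')$ differ only by the empty cluster, which no network needs to ``represent'', they have the same reticulation number, so it suffices to show $h(\{T_1|\cX',T_2|\cX'\}) \le h(\cT)$ and $h(\cT) \le h(\{T_1|\cX',T_2|\cX'\}) + 1$. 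For the first (easy) inequality I would take a binary network $N$ with $r(N) = h(\cT)$ displaying binary refinements $T_1', T_2'$ of $T_1, T_2$, delete the leaf $x$ from $N$, and tidy up: the result $N''$ satisfies $r(N'') \le r(N)$ and displays $T_i'|\cX'$, which is a binary refinement of $T_i|\cX'$; hence $h(\{T_1|\cX',T_2|\cX'\}) \le r(N'') \le r(N) = h(\cT)$.

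For the substantive inequality I would start from an optimal solution of the smaller instance and re-insert $x$ at the cost of a single reticulation. First take a binary network $N'$ with $r(N') = h(\{T_1|\cX',T_2|\cX'\})$ displaying binary refinements $R_1$ of $T_1|\cX'$ and $R_2$ of $T_2|\cX'$. For $i \in \{1,2\}$ let $D_i$ be the cluster of the parent of $x$ in $T_i$, with $x$ removed. A short cluster bookkeeping argument shows $D_i \in Cl(T_i|\cX') \subseteq Cl(R_i)$, and that the clusters of $T_i$ that contain $x$ are exactly $\{x\}$ together with $E \cup \{x\}$, where $E$ ranges over $D_i$ and the clusters of the ancestors of the parent of $x$ in $T_i$ — every such $E$ being either $\cX'$ or a cluster of $T_i|\cX'$, hence (other than $\cX'$ itself) an ancestor-cluster of $D_i$ in $R_i$. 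Consequently, if $e_i$ denotes the edge of $R_i$ entering the node whose cluster is $D_i$, then subdividing $e_i$ with a new node and attaching a new leaf $x$ below it yields a tree $R_i^{+}$ with $Cl(T_i) \subseteq Cl(R_i^{+})$, i.e.\ a refinement of $T_i$. The degenerate case $D_i = \cX'$ (that is, $x$ adjacent to the root of $T_i$) is handled by first adjoining a new root above $R_i$ and taking $e_i$ to be the new root edge.

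Next I would lift this to the network level. Each edge $e_i$ of $R_i$ is the image of a directed path in $N'$ obtained by deleting a fixed set of reticulation edges and suppressing the resulting degree-$2$ nodes; pick any edge $e_i'$ on that path. Form $N^{+}$ from $N'$ by adding a reticulation $\rho$ with a pendant leaf $x$, subdividing $e_1'$ by a node $q_1$ and $e_2'$ by a node $q_2$ (if $e_1' = e_2'$, simply subdivide it twice in sequence), and adding the edges $q_1 \to \rho$ and $q_2 \to \rho$. A direct edge/vertex count gives $r(N^{+}) = r(N') + 1$. To see that $N^{+}$ displays a refinement of $T_1$ (and symmetrically of $T_2$): from $N^{+}$ delete the reticulation edges that turn $N'$ into $R_1$, together with the edge $q_2 \to \rho$, and tidy up; the node $q_2$ and then $\rho$ become suppressible, and the surviving tree is exactly $R_1^{+}$, which refines $T_1$ and can be refined further to a binary tree without altering $N^{+}$. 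Hence $h(\cT) \le r(N^{+}) = r(N') + 1 = h(\{T_1|\cX',T_2|\cX'\}) + 1$, which combined with the easy inequality and $r(\mathcal{C}) = h(\cT)$ gives $r(\mathcal{C}) - 1 \le r(\mathcal{C}\setminus\{x\}) \le r(\mathcal{C})$.

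The step I expect to be the main obstacle is verifying that a single subdivision of $R_i$ genuinely yields a refinement of $T_i$: this requires pinning down precisely which clusters of $T_i$ contain $x$, checking that each of them corresponds to an ancestor-cluster of $D_i$ in $R_i$, and dealing cleanly with the boundary case where $x$ hangs directly below the root. The remaining bookkeeping — that $r(N^{+}) = r(N') + 1$, and that deleting $q_{3-i}\to\rho$ together with the appropriate reticulation edges recovers $R_i^{+}$ after tidying — is routine but worth writing out explicitly.
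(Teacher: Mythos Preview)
Your argument is correct. Both inequalities go through as you outline: the cluster bookkeeping showing $Cl(T_i)\subseteq Cl(R_i^{+})$ works because any cluster of $T_i$ not containing $x$ cannot strictly contain $D_i$ (by compatibility with $\cX(p)=D_i\cup\{x\}$), and any cluster containing $x$ is $\{x\}$ or $E\cup\{x\}$ with $D_i\subseteq E\in Cl(R_i)$. The only technicality you leave implicit is that in the degenerate case $D_i=\cX'$ one must also adjoin a dummy root edge to $N'$ (not just to $R_i$) so that the new root edge of $R_i$ has an image in the network; this is standard and harmless.

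That said, your route differs from the paper's in two respects. First, the paper works directly at the level of cluster representation: for the substantive inequality it simply invokes \cite[Lemma~10]{journalElusive}, which says that \emph{any} network $N'$ representing $\cC\setminus\{x\}$ can be extended to a network representing $\cC$ with at most one extra reticulation. No passage through $h(\cT)$, no tree refinements, no explicit identification of re-attachment points. Second, your argument is self-contained and carried out in the display framework, translating back and forth via $r(\cC)=h(\cT)$; in effect you are re-deriving the content of the cited lemma in the special two-tree setting. What you gain is transparency (one sees exactly how the single new reticulation arises from the two re-grafting points in $T_1$ and $T_2$, foreshadowing the construction in Lemma~\ref{lem:unified}); what the paper gains is brevity, since the heavy lifting has already been done in the softwired-cluster literature and applies to arbitrary cluster sets, not just those arising from two trees.
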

\begin{proof}
The second $\leq$ is immediate because removing a taxon from a cluster set cannot raise the reticulation number of the cluster set. The first $\leq$ holds
because in \cite[Lemma 10]{journalElusive} it is shown how, given \emph{any} network $N'$ on $\cX \setminus \{x\}$ that represents $\cC \setminus \{x\}$,
we can extend $N'$ to obtain a network $N$ on $\cX$ that represents $\cC$ such that $r(N) \leq r(N')+ 1$.
\end{proof}

\noindent
We recall the following definition from \cite{journalElusive}. For a set of clusters $\cC$ on $\cX$, we call $(S_1, S_2, ..., S_p)$  $(p \geq 0)$
an \emph{ST-set tree sequence of length $p$} if $S_1$ is a ST-set of $\cC$,
$S_2$ is a  ST-set of $\cC \setminus S_1$, $S_3$ is a  ST-set of $\cC \setminus S_1 \setminus S_2$
(and so on) and if all the clusters in $\cC \setminus S_1 \setminus \ldots \setminus S_p$
are mutually compatible i.e. can be represented by a tree. If $\cC = Cl(\cT)$ where $\cT = \{T_1, T_2\}$ are two trees on $\cX$, then
$r(\cC)$ is exactly equal to the minimum length of an ST-set tree sequence for $\cC$ \cite[Corollary 9]{journalElusive}. Essentially, the ST-set tree sequence
describes an order in which common pendant subtrees can be iteratively pruned from $T_1$ and $T_2$ to obtain a common tree $T$. As an example,
the two trees in Figure \ref{fig:subtree}(a) and (b) have a minimum-length ST-set tree sequence $(\{b,c\}, \{d,e,f\})$, and the hybridization number of
these two trees is indeed 2.\\
\\
Observation \ref{obs:SBRexists} and Lemma \ref{lem:atmostone} show that, in an ST-collapsed cluster set, there \emph{always} exists at least one taxon $x$
such that $r(\cC \setminus \{x\}) = r(\cC) - 1$, and that this is the best possible decrease in reticulation number. If we somehow locate such an $x$ (it does \emph{not} matter which one), construct $\cC \setminus \{ x \}$, compute its maximal ST-sets, collapse them, and then repeat this until we obtain a compatible set of clusters,
we are actually constructing a minimum-length ST-set tree sequence $(S_1, \ldots, S_{r(\cC)})$ of $\cC$. (Note that the actual $S_i$ can easily be obtained by reversing any
collapsing operations). Such a sequence not only tells us $r(\cC)$, it also instructs us how to construct in polynomial time a network $N$ which represents
all the clusters in $\cC$ such that $r(\cC) = r(N)$  \cite[Theorem 3]{journalElusive}. Less obviously, it also tells us how to construct a network $N$ with $r(N)=r(C)=h(\cT)$ which displays the two trees that $\cC$ came from:

\begin{lemma}
\label{lem:unified}
Let $\cC = Cl(\cT)$ be a set of clusters on $\cX$, where $\cT = \{T_1,T_2\}$ is a set of two trees on $\cX$. Let $(S_1, \ldots, S_p)$ be an ST-set tree sequence of $\cC$.
Then in polynomial time we can construct a network $N$ that displays binary refinements of
$T_1$ and $T_2$ such that $r(N) = p$.
\end{lemma}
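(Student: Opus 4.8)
The plan is to proceed by induction on $p$, the length of the ST-set tree sequence, leveraging the cluster-to-network extension machinery from \cite{journalElusive} together with the reticulation-number-preserving transformation alluded to at the end of Section \ref{sec:prelim}. First I would handle the base case $p = 0$: here $\cC = Cl(T_1) \cup Cl(T_2)$ is already compatible, so there is a single tree $T$ with $Cl(T_1) \cup Cl(T_2) \subseteq Cl(T)$; taking any binary refinement $T'$ of $T$ (which exists and still contains all clusters of $T_1$ and $T_2$) gives a network with $r(T') = 0 = p$ that displays binary refinements of both $T_1$ and $T_2$, since $Cl(T_i) \subseteq Cl(T')$ is exactly the displaying condition after refinement.

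For the inductive step, suppose the claim holds for sequences of length $p-1$. Given $(S_1, \ldots, S_p)$, the tail $(S_2, \ldots, S_p)$ is an ST-set tree sequence of $\cC' := \cC \setminus S_1$. The subtlety is that $\cC'$ is \emph{not} of the form $Cl(T_1') \cup Cl(T_2')$ for trees on $\cX \setminus S_1$ unless $S_1$ is a common pendant subtree --- but by \cite[Lemma 6]{journalElusive}, an ST-set $S_1$ of $Cl(T_1) \cup Cl(T_2)$ is precisely a common pendant subtree (equivalently, a union of clusters hanging off a single node in each $T_i$, in an appropriate refinement). So I would first refine each $T_i$ just enough to make $S_1$ a pendant subtree, prune the subtree on $S_1$ from each, obtaining $T_1', T_2'$ on $(\cX \setminus S_1) \cup \{\ell\}$ where $\ell$ is a new leaf representing the pruned blob, and observe $Cl(T_1') \cup Cl(T_2')$ restricted appropriately has $(S_2, \ldots, S_p)$ as a valid ST-set tree sequence of length $p-1$. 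Apply induction to get a network $N'$ of reticulation number $p-1$ displaying binary refinements of $T_1', T_2'$. Then I would re-attach, below the leaf $\ell$ in $N'$, a binary refinement of the common pendant subtree on $S_1$ (which exists because $T_1|S_1$ and $T_2|S_1$ have a common refinement, by definition of common pendant subtree), and argue this costs no extra reticulations: the attachment happens below a tree edge, so $r(N) = r(N') = p-1$...

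--- but that gives $p-1$, not $p$, which is wrong, so the induction must be set up the other way. The correct structure is: removing $S_1$ \emph{decreases} the minimum sequence length, so I should instead induct by building up. Start from a compatible residual cluster set and a tree realizing it, then \emph{add back} the ST-sets $S_p, S_{p-1}, \ldots, S_1$ one at a time, each time invoking \cite[Lemma 10]{journalElusive} (as used in Lemma \ref{lem:atmostone}) to extend the current network at the cost of at most one reticulation, so after $p$ re-insertions we have $r(N) \leq p$; combined with $r(\cC) = h(\cT) \geq$ the minimum sequence length (and $p$ being such a sequence's length, possibly non-minimal) one concludes $r(N) = p$ when the sequence is minimal, and $\le p$ in general --- but the statement asks for exactly $p$, so I would either assume minimality is folded into the claim's usage or show the construction yields precisely $p$ reticulations because each re-insertion of a genuinely incompatible $S_i$ forces exactly one new reticulation. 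The genuinely delicate point, and the main obstacle, is the final step: \cite{journalElusive}'s Lemma 10 / Theorem 3 produce a network representing the \emph{clusters} $\cC$, but we need one that \emph{displays binary refinements} of $T_1, T_2$ --- and the excerpt explicitly warns these are not the same thing. So the real work is invoking (or reproving a lightweight version of) the polynomial-time reticulation-number-preserving transformation from cluster-representing network to tree-displaying network, checking it interacts correctly with the pendant re-attachment of each $S_i$ and preserves $r(N) = p$ throughout. I expect that transformation, and verifying it does not inflate the reticulation count, to be the crux of the argument.
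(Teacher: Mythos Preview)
Your ``build up from the compatible residual, re-inserting $S_p,\ldots,S_1$'' direction is exactly what the paper does, but your execution has a genuine circularity. You propose to use \cite[Lemma 10]{journalElusive} to obtain, after $p$ re-insertions, a network that \emph{represents} $\cC$, and then to invoke the reticulation-number-preserving transformation from cluster-representing to tree-displaying networks. But that transformation is precisely Corollary~\ref{cor:clus2tree} of this paper, whose proof \emph{uses} Lemma~\ref{lem:unified}. So you cannot call on it here; the ``crux'' you flag at the end is not a detail to be filled in but the whole content of the lemma.

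The paper sidesteps this by never passing through clusters at all during the build-up. At stage $i$ it maintains a network $N_i$ together with explicit images of $T_1|\cX_i$ and $T_2|\cX_i$ inside $N_i$. Since $S_i$ is an ST-set of $\cC_{i-1}$, it is a common pendant subtree of $T_1|\cX_{i-1}$ and $T_2|\cX_{i-1}$; hence each of these two trees is obtained from the corresponding tree on $\cX_i$ by grafting $S_i$ back at a single, identifiable point. One takes an arbitrary binary tree $T^{*}$ that is a common refinement of $T_1|S_i$ and $T_2|S_i$, hangs $T^{*}$ below a fresh reticulation node, and attaches the two incoming reticulation edges at (or just above) the two re-graft points inside the two tree images in $N_i$. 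This yields $N_{i-1}$ displaying binary refinements of $T_1|\cX_{i-1}$ and $T_2|\cX_{i-1}$, with exactly one more reticulation than $N_i$. After $p$ steps, $r(N_0)=p$ on the nose, and there is no need for any post-hoc transformation.

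A side remark on your first attempt: the placeholder-leaf idea fails for a different reason than you state. Replacing $S_1$ by a single leaf $\ell$ \emph{collapses} $S_1$ rather than \emph{removing} it; the resulting cluster set is not $\cC\setminus S_1$ but (essentially) $\cC$ with $S_1$ contracted to a point, and $(S_2,\ldots,S_p)$ need not be an ST-set tree sequence of that. (Take $S_1$ a singleton to see this instantly: nothing changes.) So the induction hypothesis simply does not apply there; the apparent ``$p-1$'' outcome is illusory.
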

\begin{proof}
(Figure \ref{fig:hang} shows a slightly stylized example of the following).
 Let $\cX_0 = \cX$ and
let $\cX_{i} = \cX_{i-1} \setminus S_{i}$, for $1 \leq i \leq p$. Define $\cC_i = \cC|\cX_{i}$, for
$0 \leq i \leq p$. By assumption,
the clusters in $\cC_{p}$ can be represented by a tree. This is equivalent to
saying that $T_1|\cX_{p}$ and $T_2|\cX_{p}$ have a common refinement.
We construct in polynomial time an arbitrary binary tree $T$ on $\cX_{p}$ that displays these clusters; $T$ will also be a common binary refinement of $T_1|\cX_{p}$ and $T_2|\cX_{p}$. 
Let $T = N_{p}$.
We now show how to construct a network $N_{i-1}$ that displays binary refinements of $T_1|\cX_{i-1}$ and
$T_2|\cX_{i-1}$, given an arbitrary network $N_i$ that displays binary refinements of $T_1|\cX_{i}$ and
$T_2|\cX_{i}$, for $1 \leq i \leq p$. By definition, $S_i$ is an ST-set of $\cC_{i-1}$. $S_i$ thus corresponds to a common pendant subtree of $T_1| \cX_{i-1}$ and
$T_2 | \cX_{i-1}$, and indeed $T_1| \cX_{i}$ and
$T_2 | \cX_{i}$ are exactly the trees obtained by pruning $S_i$ from $T_1| \cX_{i-1}$ and
$T_2 | \cX_{i-1}$. So, reversing this pruning means that $T_1| \cX_{i-1}$ and
$T_2 | \cX_{i-1}$ can be obtained from  $T_1| \cX_{i}$ and
$T_2 | \cX_{i}$ (respectively) by re-grafting $S_i$ at a particular vertex or edge.
Specifically, let $T^{*}$ be an arbitrary binary tree that
represents $\cC_{i-1} | S_i$, this will also be a common binary refinement of the common pendant subtree $S_i$. Now, $N_{i-1}$ can be obtained from $N_{i}$ by extending the
images of $T_1|\cX_{i}$ and $T_2|\cX_{i}$ inside $N_i$ as follows: we introduce
$T^{*}$ below a new reticulation and attach this reticulation at (or, if necessary, slightly above)
the two aforementioned re-grafting points. There are some
small technicalities (such as the need for a ``dummy root'' \cite{journalElusive}) but we omit these details. 
\end{proof}

\begin{figure}[h]
\centering
\includegraphics[scale=0.2]{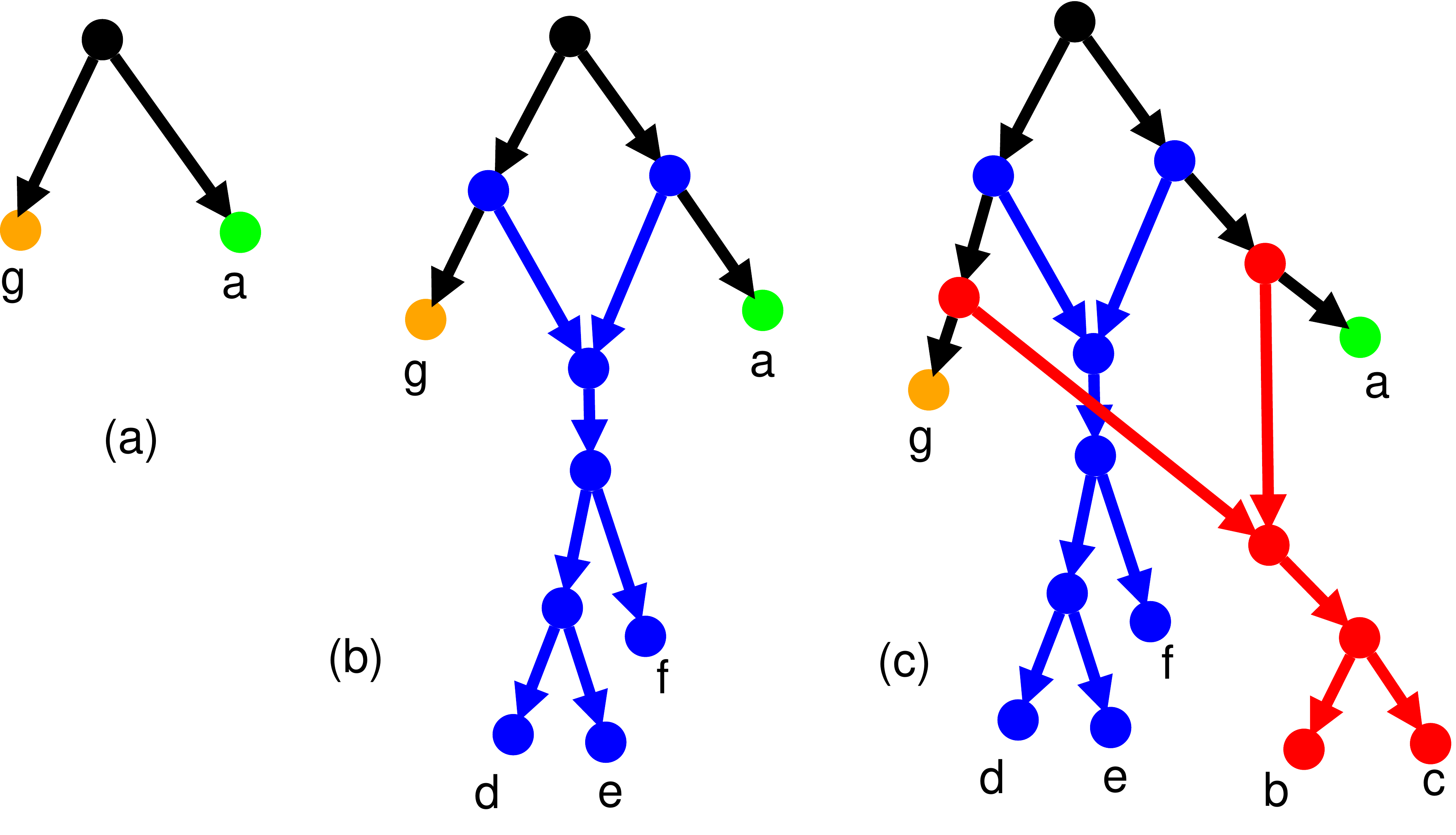}
\caption{A demonstration of the construction described in Lemma \ref{lem:unified}. The trees in Figure \ref{fig:subtree}(a) and (b) have
a minimum-length ST-set tree sequence $(\{b,c\}, \{d,e,f\})$ and here we show how to construct a network $N$ with $r(N)=2$ that
displays binary refinements of both these trees, by re-introducing the elements of the ST-set tree sequence in reverse order.}
\label{fig:hang} 
\end{figure}

\begin{corollary}
\label{cor:clus2tree}
Let $\cC = Cl(\cT)$ be a set of clusters on $\cX$, where $\cT = \{T_1,T_2\}$ is a set of two trees on $\cX$. Let $N$ be a network on $\cX$ that represents all the
clusters in $\cC$. Then in polynomial time we can construct a network $N'$ that displays binary refinements of
$T_1$ and $T_2$ such that $r(N') \leq r(N)$.
\end{corollary}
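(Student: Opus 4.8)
The plan is to exploit the given network $N$ to extract from it an ST-set tree sequence of $\cC$ of length at most $r(N)$, and then to invoke Lemma~\ref{lem:unified}. It is tempting to argue more directly, from $r(\cC)\le r(N)$ together with the fact (recalled above) that $r(\cC)$ equals the minimum length of an ST-set tree sequence of $\cC$; but a \emph{minimum}-length sequence cannot be found in polynomial time, since that would compute $r(\cC)$, which is NP-hard. The point of the construction below is that, once $N$ is available, a (not necessarily optimal) sequence of length at most $r(N)$ \emph{can} be produced efficiently.

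First I would preprocess $N$ into a binary network $N_0$ by a standard transformation (used implicitly in the proof of Observation~\ref{obs:SBRexists}): refine every non-leaf tree node of outdegree greater than $2$ into a caterpillar of binary tree nodes, and replace every reticulation of indegree $e>2$ by a cascade of $e-1$ indegree-$2$ reticulations. This does not increase the reticulation number and does not shrink the set of represented clusters, so $N_0$ represents $\cC$ and $r(N_0)\le r(N)$. If $\cC$ is already compatible then the empty ST-set tree sequence works and Lemma~\ref{lem:unified} returns, as the required $N'$, a common binary refinement of $T_1$ and $T_2$ with $r(N')=0$; so from now on assume $r(N_0)\ge 1$.

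Next I would iterate the pruning argument from the proof of Observation~\ref{obs:SBRexists}. By acyclicity $N_0$ contains an SBR: a node $u$ of indegree $1$ whose parent is a reticulation $\rho$ and from which no reticulation is reachable. Let $S_1$ be the set of taxa reachable from $u$. Since no reticulation lies below $u$, the subtree of $N_0$ rooted at $u$ is pendant and has the same topology in every tree displayed by $N_0$; from this one checks that $S_1$ is compatible with every cluster of $\cC$ and that all clusters of $\cC|S_1$ are mutually compatible, i.e.\ $S_1$ is an ST-set of $\cC$. Deleting $\rho$ and the subtree rooted at $u$, and tidying up, yields a binary network $N_1$ on $\cX\setminus S_1$ that represents $\cC\setminus S_1$ and satisfies $r(N_1)\le r(N_0)-1$ (removing the indegree-$2$ reticulation $\rho$ drops the reticulation number by exactly one, and the subsequent tidying never raises it). Repeating this while the current cluster set remains incompatible yields ST-sets $S_1,S_2,\ldots$ and networks $N_0,N_1,N_2,\ldots$ with $r(N_i)\le r(N_0)-i$; hence after some $p\le r(N_0)\le r(N)$ steps the cluster set $\cC\setminus S_1\setminus\cdots\setminus S_p$ is compatible, so $(S_1,\ldots,S_p)$ is an ST-set tree sequence of $\cC$. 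Every step here runs in polynomial time. Feeding $(S_1,\ldots,S_p)$ into Lemma~\ref{lem:unified} then produces, in polynomial time, a network $N'$ that displays binary refinements of $T_1$ and $T_2$ with $r(N')=p\le r(N)$, which is what we want.

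The step I expect to require the most care is the pruning step itself: checking rigorously that the SBR subtree $S_1$ is genuinely an ST-set of $\cC$ (both compatibility with all clusters of $\cC$ and internal compatibility of $\cC|S_1$, using that no reticulation lies below the SBR), that the object obtained after deleting $\rho$ and the subtree and then applying the tidying-up operations is a valid network still representing $\cC\setminus S_1$, and that the reticulation number strictly decreases at each iteration so that the process terminates after at most $r(N)$ rounds. The binarization preprocessing is routine but should be accompanied by the brief observation that it preserves the reticulation number and does not reduce the set of represented clusters.
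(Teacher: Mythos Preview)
Your proposal is correct and follows essentially the same route as the paper: iteratively prune an SBR from the network to obtain an ST-set tree sequence of length at most $r(N)$, then invoke Lemma~\ref{lem:unified}. The paper's proof is terser (it defers the dismantling details to \cite[Lemma 7]{journalElusive}) and does not explicitly binarize $N$ first, since the SBR notion and the ``at most one reticulation disappears per step'' bound work for arbitrary networks; your binarization preprocessing is harmless but not needed.
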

\begin{proof}
If $N$ is a tree we can simply take a binary refinement of $N$ and we are done. Otherwise, $N$ contains at least one SBR. The taxa in an SBR form an ST-set. So if we identify an SBR of $N$ (which can easily be done in polynomial time), remove it (and tidy up in the usual fashion), and repeat this until we obtain a tree, we obtain an ST-set tree sequence of length at most $r(N)$. (It will be less than $r(N)$ if removing some SBR causes more than one reticulation to disappear from the network when tidying up). This dismantling of $N$ is described in more detail in \cite[Lemma 7]{journalElusive}. We can then apply Lemma \ref{lem:unified} to construct
the network.
\end{proof}

Lemma \ref{lem:unified} and Corollary \ref{cor:clus2tree} allow us for the remainder of the article to focus only on clusters. 

\section{Terminals}

As we have seen, computing $r(\cC)$ (and an accompanying optimal network) essentially
boils down to repeatedly identifying some taxon $x$ such that $r(\cC \setminus \{x\}) = r(\cC) - 1$. The key to attaining fixed parameter tractability is to construct a ``small'' $\cX' \subseteq \cX$ which is guaranteed to contain at least one such taxon $x$. This brings us to the following concept.\\
\\
Given a cluster set $\cC$ and $x,y \in \cX$, we write $x \rightarrow_{\cC}  y$ if and only if every non-singleton cluster in $\cC$ containing $x$, also contains $y$\footnote{Note that, if a taxon $x$ appears
in only one cluster, $\{x\}$, then (vacuously) $x \contains y$ for all $y \neq x$.}. We say that a taxon $x \in \cX$ is a \emph{terminal} if there does not exist $x' \in \cX$ such that $x \neq x'$ and
$x \rightarrow_{\cC} x'$. 

\begin{observation}
\label{obs:poset}
Let $\cC$ be an ST-collapsed set of clusters on $\cX$ such that $r(\cC) \geq 1$. Then the relation $\contains$ is a partial order on $\cX$, the terminals are the maximal
elements of the partial order and each non-singleton cluster of $\cC$ contains at least one terminal.
\end{observation}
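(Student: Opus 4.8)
The plan is to verify the three assertions in order, since each builds on the previous one: first that $\contains$ is a partial order on $\cX$, then that the terminals are exactly its maximal elements, and finally that every non-singleton cluster of $\cC$ contains a terminal.

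\emph{The partial order.} Reflexivity is immediate ($x\contains x$ always), and transitivity is a short cluster-chase: if $x\contains y$ and $y\contains z$ and $C$ is a non-singleton cluster with $x\in C$, then $y\in C$ because $x\contains y$, and then (using that $C$ is non-singleton) $z\in C$ because $y\contains z$; hence $x\contains z$. The step I expect to be the main obstacle — and the only place the hypothesis that $\cC$ is ST-collapsed is used — is antisymmetry. I would argue by contradiction: if $x\contains y$, $y\contains x$ and $x\neq y$, then a non-singleton cluster contains $x$ if and only if it contains $y$, so for every $C\in\cC$ the intersection $C\cap\{x,y\}$ equals $\emptyset$ or $\{x,y\}$ except when $C$ is a singleton; in every case $C$ is compatible with $\{x,y\}$, and $\cC|\{x,y\}$ consists only of (pairwise compatible) subsets of $\{x,y\}$. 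Hence $\{x,y\}$ is an ST-set of size two, contradicting the assumption that $\cC$ is ST-collapsed. Making this last step rigorous uses the standard structural fact that every ST-set is contained in one of the maximal ST-sets, which here are all singletons; this is the one technical ingredient I would need to pull in.

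\emph{Terminals are the maximal elements.} Once $\contains$ is a partial order, read $x\contains y$ as ``$x$ lies below $y$''. Then $x$ is a terminal precisely when there is no $x'\neq x$ with $x\contains x'$, which is exactly the statement that nothing lies strictly above $x$, i.e.\ that $x$ is a maximal element. So this part is just unwinding the definitions.

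\emph{Non-singleton clusters contain a terminal.} Let $C\in\cC$ with $|C|\geq 2$. The restriction of $\contains$ to the finite non-empty set $C$ is again a partial order, so it has a maximal element $m\in C$. If $m$ were not a terminal, there would be $x'\neq m$ with $m\contains x'$; but $C$ is a non-singleton cluster containing $m$, so by the definition of $\contains$ we get $x'\in C$, and then $x'$ is an element of $C$ lying strictly above $m$, contradicting the choice of $m$. Hence $m$ is a terminal contained in $C$. (The hypothesis $r(\cC)\geq 1$ is not used in any of these arguments; it serves only to guarantee that $\cC$ is incompatible — equivalently, that $\cX$ itself is not an ST-set — so that non-singleton clusters exist and the statement is not vacuous.)
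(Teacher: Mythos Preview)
Your proof is correct and follows essentially the same route as the paper: the antisymmetry argument via the ST-set $\{x,y\}$ is identical, and your ``maximal element of $\contains$ restricted to $C$'' argument for the last part is just the contrapositive of the paper's observation that a terminal-free $C$ would force a $\contains$-cycle. Your aside that an ST-set must sit inside a maximal ST-set is a fair point the paper leaves implicit, and your remark about $r(\cC)\geq 1$ being inessential to the logic is also accurate.
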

\begin{proof}
The relation $\contains$ is clearly reflexive and transitive. To see that it is anti-symmetric, suppose there exist two elements $x \neq y \in \cX$ such that $x \contains y$
and $y \contains x$. Then we have that, for every non-singleton cluster $C \in \cC$, $C \cap \{x,y\}$ is either equal to $\emptyset$ or $\{x,y\}$ i.e. $C$ is compatible
with $\{x,y\}$. Furthermore, the only clusters that can possibly be in $\cC | \{x,y\}$ are $\{x\}, \{y\}$ and $\{x,y\}$ and these are all mutually compatible. So
 $\{x,y\}$ is an ST-set, contradicting the fact that $\cC$ is ST-collapsed. Hence $\contains$ is a partial order. The fact that the terminals are the maximal elements
of the partial order then follows immediately from their definition. Finally, observe
that a non-singleton cluster $C$ must contain at least one terminal, because if it does not then
the relation $\contains$ induces a cycle on some subset of $C$, contradicting the aforementioned anti-symmetry property.
\end{proof}


Let $T$ be a phylogenetic tree on $\cX$. For a vertex $u$ of $T$ we define $\cX(u) \subseteq \cX$ to be the set of all taxa that can be reached from $u$ by directed
paths. For a taxon $x \in \cX$ we define $W^{T}(x)$, the \emph{witness set} for $x$ in $T$, as $\cX(u) \setminus \{x\}$, where $u$ is the parent of $x$. A critical
property of $W^{T}(x)$ is that, for any non-singleton cluster $C \in Cl(T)$ that contains $x$, $W^{T}(x) \subseteq C$ \cite{journalElusive}.

\begin{observation}
\label{obs:termequiv}
Let $\cC = Cl(\cT)$ be a set of clusters on $\cX$, where $\cT = \{T_1,T_2\}$ is a set of two trees on $\cX$ with no
non-trivial common pendant subtrees, and $r(\cC) \geq 1$. Then for any $x \in \cX$ the following statements are equivalent:
(1) $x$ is a terminal of $\cC$;
(2) there exist incompatible clusters $C_1, C_2 \in \cC$ such that $C_1 \cap C_2 = \{x\}$;
(3) $W^{T_1}(x) \cap W^{T_2}(x) = \emptyset$.
\end{observation}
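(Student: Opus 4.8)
The plan is to prove the equivalence by establishing the cycle of implications $(1) \Rightarrow (3) \Rightarrow (2) \Rightarrow (1)$, exploiting throughout the witness-set machinery and the equivalence between ST-sets and common pendant subtrees recorded in the preliminaries.

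First I would prove $(1) \Rightarrow (3)$ by contraposition. Suppose $W^{T_1}(x) \cap W^{T_2}(x) \neq \emptyset$, and let $y$ be a taxon in the intersection. I claim $x \contains y$, which by definition means $x$ is not a terminal. Indeed, take any non-singleton cluster $C \in \cC$ containing $x$; then $C$ is represented by an edge of $T_1$ or of $T_2$, say $T_1$ (the other case is symmetric). By the critical property of witness sets stated just before the observation, $W^{T_1}(x) \subseteq C$, and since $y \in W^{T_1}(x)$ we get $y \in C$. Hence every non-singleton cluster containing $x$ also contains $y$, so $x \contains y$ with $x \neq y$, contradicting $(1)$.

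Next, $(3) \Rightarrow (2)$. Assume $W^{T_1}(x) \cap W^{T_2}(x) = \emptyset$. Let $u_i$ be the parent of $x$ in $T_i$ and let $C_i = \cX(u_i)$, the cluster represented by the edge entering $u_i$. Since $r(\cC) \geq 1$ the trees are not a common refinement of each other, but more carefully: I want both $C_1$ and $C_2$ to be non-singletons so that they genuinely lie in $\cC$. If, say, $C_1 = \{x\}$ were a singleton, then $W^{T_1}(x) = \emptyset$; I would argue this forces a non-trivial common pendant subtree or else $x$'s behaviour is degenerate — more precisely, one can check that $\{x\}$ being pendant-isolated in one tree still leaves room, so the cleanest route is: if $W^{T_1}(x)=\emptyset$ then $u_1$ has $x$ as its only taxon-descendant among the relevant edges, and one shows (using the ST-collapsed hypothesis and that the two trees share no non-trivial common pendant subtree) that this cannot happen, or handle it by noting $C_1$ can then be taken to be the cluster of $u_1$'s parent edge, which is non-singleton and still satisfies $C_1 \cap \{x\}$-type containment; I will need to be slightly careful here. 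Granting $C_1, C_2$ are non-singleton clusters of $\cC$, we have $x \in C_1 \cap C_2$. Conversely $C_1 \cap C_2 \subseteq \{x\}$: if $z \in C_1 \cap C_2$ with $z \neq x$, then $z \in \cX(u_i) \setminus \{x\} = W^{T_i}(x)$ for both $i$, contradicting disjointness of the witness sets. Hence $C_1 \cap C_2 = \{x\}$. Finally $C_1$ and $C_2$ are incompatible: they are not disjoint (they share $x$), and neither contains the other — if $C_1 \subseteq C_2$ then $C_1 \setminus \{x\} = W^{T_1}(x) \subseteq C_2 \setminus \{x\}$ would have to be nonempty (as $|C_1| \geq 2$) and contained in $W^{T_2}(x)$, again contradicting disjointness; symmetrically for $C_2 \subseteq C_1$. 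So $(2)$ holds.

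Lastly, $(2) \Rightarrow (1)$, again by contraposition. Suppose $x$ is not a terminal, so there is $x' \neq x$ with $x \contains x'$. Let $C_1, C_2 \in \cC$ be any incompatible clusters with $x \in C_1 \cap C_2$; since incompatible clusters are in particular non-singletons, $x \contains x'$ gives $x' \in C_1$ and $x' \in C_2$, so $\{x, x'\} \subseteq C_1 \cap C_2$ and thus $C_1 \cap C_2 \neq \{x\}$. As this holds for every such pair, no pair witnessing $(2)$ exists. This closes the cycle. I expect the main obstacle to be the edge case in $(3) \Rightarrow (2)$ where the parent-edge cluster of $x$ in one of the trees is a singleton (equivalently, $x$ hangs directly off a vertex all of whose other descendant-taxa also sit below a single child): one must check, using that $\cC$ is ST-collapsed and that $T_1, T_2$ have no non-trivial common pendant subtree together with $r(\cC) \geq 1$, that one can always select genuine non-singleton clusters of $\cC$ realising the intersection $\{x\}$; everything else is a routine unwinding of the witness-set containment property.
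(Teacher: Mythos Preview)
Your cycle of implications and the arguments for $(1)\Rightarrow(3)$ and $(2)\Rightarrow(1)$ match the paper's proof exactly. The only real issue is that in $(3)\Rightarrow(2)$ you mis-identify the edge case. The situation $W^{T_i}(x)=\emptyset$ (equivalently $\cX(u_i)=\{x\}$) simply cannot occur in a phylogenetic tree with $|\cX|\geq 2$: the parent $u_i$ of the leaf $x$ has outdegree at least $2$ (no indegree-1 outdegree-1 nodes), so some other taxon lies below $u_i$. Your parenthetical gloss (``$x$ hangs directly off a vertex all of whose other descendant-taxa also sit below a single child'') does not describe $W^{T_i}(x)=\emptyset$ either; in that scenario $W^{T_i}(x)$ still contains all those other taxa.

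The genuine edge case is the one you never name: $u_i$ might be the \emph{root}, in which case $\cX(u_i)=\cX$ is not a proper cluster and hence not in $\cC$. The paper disposes of this in one line: if $u_1$ is the root then $W^{T_1}(x)=\cX\setminus\{x\}$, and since $W^{T_2}(x)$ is nonempty and contained in $\cX\setminus\{x\}$, the two witness sets intersect, contradicting $(3)$. Hence under $(3)$ neither $u_i$ is the root, both $C_i:=W^{T_i}(x)\cup\{x\}$ are honest non-singleton clusters in $\cC$, and your remaining argument (that $C_1\cap C_2=\{x\}$ and that they are incompatible) goes through verbatim. Replace your speculative paragraph about singletons and ST-collapsedness with this one-line root argument and the proof is complete and essentially identical to the paper's.
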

\begin{proof}
We first prove that (2) implies (1). For $x' \not \in C_1 \cup C_2$ it holds that $x \notcontains x'$, because
$x \in C_1$ but $x' \not \in C_ 1$. For $x' \in C_1 \setminus C_2$ it cannot hold that $x \contains x'$, because
$x \in C_2$ but $x' \not \in C_2$, and this holds symmetrically for $x' \in C_2 \setminus C_1$. Hence $x$ is
a terminal. We now show that (1) implies (3).
Suppose (3) does not hold. Then there exists some taxon $x' \in W^{T_1}(x) \cap W^{T_2}(x)$. So
every non-singleton cluster in $\cC$ that contains $x$ also contains $x'$, irrespective of whether the cluster came from $T_1$ or $T_2$.
But then $x \contains x'$, so (1) does not hold. Hence (1) implies (3). Finally, we show that (3) implies (2). Note that (3) implies
that in both $T_1$ and $T_2$ the parent of $x$ is \emph{not} the root. If this was not so, then (wlog) $W^{T_1}(x) = \cX \setminus \{x\}$,
and combining this with the fact that $W^{T_1}(x), W^{T_2}(x) \neq \emptyset$ would contradict (3). 
Hence $W^{T_1}(x) \cup \{x\} \in Cl(T_1)$ and $W^{T_2}(x) \cup \{x\} \in Cl(T_2)$, from which (2) follows.
\end{proof}

For two nodes $u \neq v$ in a network we define a \emph{tree path from $u$ to $v$} as a directed path
that starts at $u$ and ends at $v$ such that all interior nodes of the path are tree nodes. This definition includes the possibility that
$u$ and/or $v$ are reticulation nodes, this will be clear from the specific context. Observe that if $x \neq y$ are taxa in a network $N$ that
represents a set of clusters $\cC$ and there is a tree path from the parent of $x$ to $y$, then $x \contains y$. The set of nodes \emph{reachable by a tree path from $u$} is the set of all $v \neq u$ such that there is a tree path from $u$ to $v$.

\begin{lemma}
\label{lem:3r}
Let $\cC$ be an ST-collapsed set of clusters on $\cX$ such that $r(\cC) \geq 1$. Then 
$\cC$ has at most $3 \cdot  r(\cC)$ terminals.
\end{lemma}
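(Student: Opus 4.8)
The goal is to bound the number of terminals by $3r(\cC)$. Since collapsing maximal ST-sets does not change the reticulation number, I may assume $\cC$ is of the form $Cl(\{T_1,T_2\})$ with $T_1, T_2$ having no non-trivial common pendant subtree (using the equivalence between ST-collapsed and ``no common pendant subtree'' recalled in the preliminaries; strictly one should note that collapsing can only merge terminals, so proving the bound for the collapsed instance suffices). Then I have access to the full arsenal of Section 3 and 4, in particular the characterization in Observation \ref{obs:termequiv} that $x$ is a terminal iff $W^{T_1}(x) \cap W^{T_2}(x) = \emptyset$, and the fact (Corollary \ref{cor:clus2tree}) that there is an optimal binary network $N$ with $r(N) = r(\cC)$ that displays binary refinements of $T_1$ and $T_2$.

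\textbf{Main argument.} Fix such an optimal binary network $N$ and let $t$ be a terminal. I want to ``charge'' $t$ to a reticulation of $N$, in such a way that no reticulation gets charged more than $3$ times; since $N$ is binary it has exactly $r(\cC)$ reticulation nodes, giving the bound. The idea is: in $N$, follow the tree path upward from the leaf $t$; let $u$ be the \emph{parent} of $t$. If the parent of $t$ (or some node just above it reachable by a tree path) were the tail of a tree path reaching many taxa, then $t \contains y$ for all those $y$, contradicting $t$ being a terminal. More precisely, let $p$ be the first node on the path from $t$ towards the root at which a reticulation edge is encountered (either $p$ itself is a reticulation, or $p$ is a tree node one of whose incoming or outgoing edges, on the relevant side, is a reticulation edge). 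Using the observation just before the lemma --- if there is a tree path from the parent of $x$ to $y$ then $x \contains y$ --- together with the fact that the maximal ST-sets are singletons, I can argue that the ``tree-path region'' hanging off the parent of $t$ contains no other taxon besides $t$, so $t$ must be attached very close to a reticulation. I then charge $t$ to that nearby reticulation $\rho(t)$. The key combinatorial point is to show each reticulation is the target of this charging map at most three times: a reticulation $v$ has indegree $2$ and outdegree $1$ in the binary network, so there are at most two ``incoming'' directions and one ``outgoing'' direction through which a terminal-leaf can be hung immediately adjacent to it; a careful case analysis of where exactly $t$ sits relative to $\rho(t)$ (directly below $v$ via the out-edge; or on a short tree path from the head of one of the two in-edges) yields the constant $3$.

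\textbf{The hard part.} The delicate step is making the charging well-defined and the bound on multiplicities exactly $3$ rather than some larger constant. One has to be careful that the ``nearest reticulation'' is well-defined and that two distinct terminals hung near the same reticulation really do occupy distinct one of the (at most three) slots; this requires using that $\cC$ is ST-collapsed to rule out two terminals sitting in the same pendant position (which would make them a common pendant subtree, i.e. a non-singleton ST-set). I expect the cleanest route is to phrase it via the two trees directly rather than via the network: for each terminal $t$, Observation \ref{obs:termequiv} gives clusters $C_1 \in Cl(T_1)$, $C_2 \in Cl(T_2)$ with $C_1 \cap C_2 = \{t\}$, namely $C_i = W^{T_i}(t) \cup \{t\}$. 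In the optimal network $N$ these clusters are realized, and incompatible pairs of realized clusters force reticulation edges to be ``used''; the standard softwired-cluster counting (as in \cite{journalElusive}) then associates each terminal to reticulation edges, and bounding the overlap of these associations by $3$ is the core estimate. I would budget most of the writing on this multiplicity bound and treat everything else as routine appeals to the cited lemmas.
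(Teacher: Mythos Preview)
Your charging strategy---map each terminal to a nearby reticulation in an optimal binary network and show each reticulation is hit at most three times---is exactly the paper's proof. Two clarifications will make it go through cleanly. First, the direction: you do \emph{not} walk upward from $t$ toward the root. Instead, if $p(t)$ is a reticulation you charge $t$ to $p(t)$; otherwise $p(t)$ is a tree node, and since $t$ is terminal no other taxon is reachable from $p(t)$ by a tree path (else $t\contains y$), so by ST-collapsedness some reticulation $r$ \emph{is} reachable from $p(t)$ by a tree path going \emph{downward}, and you charge $t$ to that $r$. Second, the reduction to $\cC=Cl(\{T_1,T_2\})$ is both unjustified (ST-collapsed does not imply this form) and unnecessary: the argument only uses an arbitrary binary $N$ with $r(N)=r(\cC)$ that represents $\cC$, which always exists. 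With the charging so defined, the multiplicity bound is a one-liner (one slot from the out-edge, one per in-edge), not the ``hard part'' you anticipated.
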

\begin{proof}
Let $N$ be a network on $\cX$ such that $N$ represents $\cC$ and $r(N)=r(\cC)$. Without loss of generality we
can assume $N$ is binary. For each $x \in \cX$, exactly one of the following conditions holds: (1) the parent
of $x$ in $N$ is a reticulation; (2) the parent of $x$ in $N$ is not a reticulation but there is a directed path from the parent of $x$ in $N$ to a reticulation. To see
this observe that if neither condition holds then $N$ contains an edge $(u,v)$ such that at least two taxa, but no reticulations,
are reachable by directed paths from $v$. But then $\cC$ contains a non-singleton ST-set, contradiction. Let $R(N)$ be the
reticulation nodes in $N$. Let $\Omega(\cC) \subseteq \cX$ denote the set of terminals of $\cC$.
We describe a function $F:\Omega(\cC)  \rightarrow R(N)$ such that each reticulation is mapped to at most 3 times, from which the result follows. For each terminal $x$ for which condition (1) holds,
$F(x) = p(x)$, where $p(x)$ is the parent of $x$. For each terminal $x$ for which condition (2) holds, choose a reticulation $r$ such that there is
a tree path from $p(x)$ to $r$, and set $F(x) = r$. Note that there cannot ever be a tree path from $p(x)$ to $y$ if
$x \neq y$ are both terminals, because this would mean $x \contains y$. Now, it follows that a reticulation can be mapped to (in $F$) in at most 3 ways: from a terminal
immediately below it and from one terminal per incoming edge.   
\end{proof}

\begin{corollary}
\label{cor:2rPlus1}
Let $\cC$ be an ST-collapsed set of clusters on $\cX$ such that $r(\cC) \geq 1$. Any subset
of terminals with cardinality $2 \cdot r(\cC) + 1$ or higher, contains at least one taxon $x$ 
such that $r(\cC \setminus \{x\}) < r(\cC)$.
\end{corollary}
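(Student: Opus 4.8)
The plan is to fix an optimal \emph{binary} network $N$ for $\cC$, i.e. one with $r(N) = r(\cC)$, and to re-use the dichotomy from the proof of Lemma~\ref{lem:3r}: every taxon $y$ is of \emph{type (1)} (the parent of $y$ in $N$ is a reticulation) or of \emph{type (2)} (the parent of $y$ is a tree node from which a reticulation is reachable by a tree path), exactly one of the two holding. Writing $\Sigma$ for the given set of terminals with $|\Sigma| \geq 2 r(\cC) + 1$, I will show that $\Sigma$ must contain a type-(1) terminal, and that any type-(1) terminal is a taxon whose deletion strictly decreases the reticulation number.

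First I would dispatch the type-(1) terminals. If $x \in \Sigma$ is of type (1), let $\rho$ be its reticulation parent in $N$. Since $N$ is binary, $\rho$ has outdegree $1$ and its only child is the leaf $x$; thus $x$ is an SBR whose associated set of taxa is $\{x\}$. Deleting $x$ and $\rho$ from $N$ and tidying up exactly as in the proof of Observation~\ref{obs:SBRexists} then yields a network $N'$ on $\cX \setminus \{x\}$ that represents $\cC \setminus \{x\}$ and satisfies $r(N') < r(N)$ (this step of that argument is readily seen to hold for an arbitrary ST-collapsed cluster set, not only for $\cC = Cl(\cT)$). Consequently $r(\cC \setminus \{x\}) \leq r(N') < r(\cC)$, so $x$ is the taxon we seek.

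It then remains to show that $\Sigma$ cannot consist entirely of type-(2) terminals. For this I would use the map $F$ from the proof of Lemma~\ref{lem:3r}, which sends each terminal to a reticulation of $N$, restricted to $\Sigma$, and sharpen the bound obtained there. The point is that a reticulation $\rho$ is the $F$-image of at most one type-(2) terminal per incoming edge: if distinct terminals $x \neq y$ were both mapped to $\rho$ with their witnessing tree paths entering $\rho$ through the same in-neighbour $u$, then, since $u$ and all nodes strictly between $p(x)$ (resp. $p(y)$) and $u$ are tree nodes and hence have indegree $1$, the two tree paths would share a common suffix ending at $u$; one of $p(x), p(y)$ would therefore lie on the other's tree path, producing a tree path from one of the two parents to the other taxon, i.e. $x \contains y$ or $y \contains x$ --- impossible since $x,y$ are distinct terminals. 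Hence, if every element of $\Sigma$ is of type (2), then $F$ restricted to $\Sigma$ is at most $2$-to-$1$, so $|\Sigma| \leq 2 |R(N)| = 2 r(N) = 2 r(\cC)$, contradicting $|\Sigma| \geq 2 r(\cC) + 1$. Therefore $\Sigma$ contains a type-(1) terminal, which by the previous paragraph is the desired taxon.

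I expect the crux to be the counting refinement in the last paragraph. Lemma~\ref{lem:3r} only records a $3$-to-$1$ bound for $F$; the improvement to the threshold $2 r(\cC) + 1$ rests on the observation that the ``third slot'' at each reticulation can only be filled by a type-(1) taxon, together with the fact that such a taxon is automatically removable-with-a-drop because it is the sole content of an SBR. A secondary point to be careful about is precisely that last fact in the general ST-collapsed setting --- in particular that the network produced by the SBR-deletion still represents $\cC \setminus \{x\}$ --- which follows from the softwired-cluster machinery underlying Observation~\ref{obs:SBRexists} but is worth stating explicitly.
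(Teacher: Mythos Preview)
Your argument is correct and is essentially the same as the paper's: both invoke the map $F$ from the proof of Lemma~\ref{lem:3r}, observe that type-(2) terminals occupy at most two of the three ``slots'' at each reticulation (one per incoming edge), and hence conclude that any set of $2r(\cC)+1$ terminals must contain a type-(1) terminal, which is an SBR. You have simply spelled out in more detail why two type-(2) terminals cannot share an incoming edge and why a type-(1) terminal in a binary optimal network is a singleton SBR; the paper leaves both points implicit.
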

\begin{proof}
From the proof of Lemma \ref{lem:3r} we observe that in any subset of $2 \cdot r(\cC) + 1$ terminals,
there exists at least one taxon $x$ for which condition (1) holds. Hence $x$ is an SBR and (as argued in Observation
\ref{obs:SBRexists})  $r(\cC \setminus \{x\}) < r(\cC)$.
\end{proof}

\section{Main result}

\noindent
For a reticulation $r$ in a network $N$, let $\cX^{t}(r)$ be the set of all taxa that can be reached by tree paths from $r$. For example,
if we label the reticulations in the network in Figure \ref{fig:display} $r_1, r_2, r_3$, from left to right, $\cX^t(r_1) = \{b\}, \cX^{t}(r_2) = \{c\}$ and
$\cX^{t}(r_3)=\{e\}$.  The following
lemma shows that an optimal network cannot contain a reticulation $r$ such that $\cX^t(r) = \emptyset$.

\begin{lemma}
\label{lem:emptyNoGood}
Let $\cC = Cl(\cT)$ be a set of clusters on $\cX$, where $\cT = \{T_1,T_2\}$ is a set of two trees on $\cX$ with no
non-trivial common pendant subtrees, and $r(\cC) \geq 1$. Let $N$ be a network on $\cX$ that represents $\cC$ and let $r$ be a reticulation
of $N$ such that $\cX^{t}(r) = \emptyset$. Then $r(\cC) < r(N)$.
\end{lemma}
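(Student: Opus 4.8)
Here is how I would attack Lemma~\ref{lem:emptyNoGood}.

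The plan is to show that $N$ is not optimal, by producing an ST-set tree sequence of $\cC$ of length strictly less than $r(N)$; since $r(\cC)$ equals the minimum length of an ST-set tree sequence of $\cC$ \cite[Corollary 9]{journalElusive}, this gives $r(\cC)<r(N)$. We may assume $N$ is binary: binarizing $N$ preserves $r(N)$, preserves that $N$ represents $\cC$, and (since it creates no new taxon below $r$) preserves $\cX^t(r)=\emptyset$.

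First I would dismantle $N$ exactly as in the proof of Corollary~\ref{cor:clus2tree}: repeatedly pick an SBR, delete it together with its reticulation parent, tidy up, and record the corresponding singleton ST-set, stopping when a tree is reached. This yields an ST-set tree sequence $(S_1,\dots,S_\ell)$ with $\ell\le r(N)$, because each deletion step lowers the reticulation number by at least one. It therefore suffices to exhibit one step that lowers the reticulation number by at least two, so that $\ell\le r(N)-1$.

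This is the only place $\cX^t(r)=\emptyset$ is used. I claim that, as long as it survives in the current network, $r$ never becomes the reticulation parent of an SBR. A reticulation is the parent of an SBR only when it has a leaf child, so it is enough to show the invariant that ``$\cX^t(r)=\emptyset$'' persists until $r$ is removed. It holds at the start; for the inductive step, deleting an SBR $z$ and its reticulation parent $\rho\neq r$ and then tidying cannot create a tree path from $r$ to a leaf, because in a binary network $\rho$ has $z$ as its unique child, so no surviving reticulation loses an incoming edge, and the resulting changes below $r$ are only removals of $z,\rho$ together with suppressions/deletions of tree nodes and collapses of multi-edges — none of which enlarges the set of taxa reachable from $r$ by a tree path; moreover, if these cleanups ever leave $r$ with no child, $r$ is deleted during tidying, which only helps us. Granting the claim, $r$ is never deleted ``explicitly'' as the reticulation parent of an SBR, yet it must disappear before the process ends (a tree has reticulation number $0$). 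Hence at some step $r$ ceases to be a reticulation purely as a by-product of tidying — its indegree drops to $1$, or it is deleted outright — and that step strips weight both from the explicitly deleted $\rho$ and from $r$, lowering the reticulation number by at least two. Therefore $\ell\le r(N)-1$ and $r(\cC)\le\ell<r(N)$.

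The step I expect to be the main obstacle is making the invariant argument watertight: the tidying operations can cascade and substantially rearrange the portion of the network below $r$ (chains of suppressions, multi-edge collapses propagating upward), so one wants to phrase the invariant carefully — ideally in terms of the original $N$, by keeping track of which nodes of the current network are images of tree-path descendants of $r$ — rather than re-deriving it ad hoc at each step.
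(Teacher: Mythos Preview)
Your approach is essentially the paper's: dismantle $N$ by repeatedly removing SBRs (as in Corollary~\ref{cor:clus2tree}) and argue that some single step drops the reticulation number by at least two, so the resulting ST-set tree sequence has length $<r(N)$. The paper phrases this by fixing $R^t(r)$ (the reticulations tree-path-reachable from $r$) in the \emph{original} $N$ and observing that, since no taxa are sandwiched between $r$ and $R^t(r)$, the iteration that removes the last surviving member of $R^t(r)$ also takes $r$ with it.

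Your invariant formulation, however, has a concrete gap. You assert that after deleting an SBR $z$ and its reticulation parent $\rho$, ``no surviving reticulation loses an incoming edge'' and that multi-edge collapses cannot enlarge $\cX^t(r)$. Both claims are false. Suppressing a tree node $a$ (a former parent of $\rho$) can create a multi-edge into a reticulation $c$ whenever the parent of $a$ was already a parent of $c$; collapsing that multi-edge drops the indegree of $c$ to $1$. If $c$ was tree-path-reachable from $r$, it is now a tree node, and any leaf below $c$ may become tree-path-reachable from $r$, violating your invariant. The fix is easy and you nearly state it yourself: whenever this happens, that same step has removed at least two units of reticulation number (one from $\rho$, one from the multi-edge collapse at $c$), so the conclusion $\ell\le r(N)-1$ already follows at that step. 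Alternatively, and this is what the paper does and what your final paragraph gestures toward, track the set $R^t(r)$ in the original $N$ rather than maintaining an invariant in the evolving network; then no delicate analysis of cascading tidy-ups is needed.
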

\begin{proof}
Let $R^t(r)$ be the set of reticulations in $N$ reachable by tree paths from $r$. Now, consider the technique described in the proof of Corollary \ref{cor:clus2tree} for
dismantling $N$ by removing one SBR at a time. All reticulations in $R^t(r)$ will be pruned away at an iteration that is earlier than or equal to the iteration in which
$r$ is pruned away. Moreover, due to the fact that $X^{t}(r) = \emptyset$ - that is, there are no taxa ``sandwiched'' between $r$ and $R^t(r)$ - there definitely exists $r' \in R^t(r)$ such that $r'$ and $r$ both vanish in the same iteration.
But this means that the technique produces an ST-set tree sequence of length strictly less than $r(N)$, which (by Lemma \ref{lem:unified}, or \cite[Theorem 3]{journalElusive}) implies
the existence of a network $N'$ that represents $\cC$ such that $r(N') < r(N)$.  
\end{proof}

\begin{corollary}
\label{cor:onesandwich}
Let $\cC = Cl(\cT)$ be a set of clusters on $\cX$, where $\cT = \{T_1,T_2\}$ is a set of two trees on $\cX$ with no
non-trivial common pendant subtrees, and $r(\cC) \geq 1$. Let $N$ be a network on $\cX$ that represents $\cC$ such that $r(N)=r(\cC)$ and let $r$ be a reticulation
of $N$ such that $\cX^{t}(r) = \{x\}$ for some $x \in \cX$. Then $r(\cC \setminus \{ x \} ) = r(\cC) - 1$.
\end{corollary}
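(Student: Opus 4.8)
The plan is to leverage Lemma~\ref{lem:emptyNoGood} together with the SBR-dismantling technique that underpins Corollary~\ref{cor:clus2tree} and Lemma~\ref{lem:unified}. First I would take the given optimal network $N$ with $r(N) = r(\cC)$ and consider the reticulation $r$ with $\cX^t(r) = \{x\}$. The key observation is that $x$ must be an SBR of $N$, or more precisely, that we can drive the dismantling process so that $x$ is pruned at a stage where $r$ disappears. Since $\cX^t(r) = \{x\}$, the only taxon ``sandwiched'' between $r$ and the reticulations reachable from $r$ by tree paths is $x$ itself; as in the proof of Lemma~\ref{lem:emptyNoGood}, all reticulations in $R^t(r)$ are pruned no later than $r$, and because the only taxon between $r$ and $R^t(r)$ is $x$, once $x$ is removed there is some $r' \in R^t(r)$ (or $r$ itself if $R^t(r) = \emptyset$) such that $r$ vanishes in the same tidying-up step.

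Next I would make this concrete. Run the dismantling procedure from the proof of Corollary~\ref{cor:clus2tree}, but observe that when the subtree $\{x\}$ below $r$ becomes an SBR (which it eventually must, since $N$ has no non-trivial ST-sets and $x$'s parent $r$ is a reticulation), removing $x$ and its reticulation parent $r$ causes the tidying-up to also eliminate the structure below $r$ that only led to $x$. This yields an ST-set tree sequence for $\cC$ of the form $(S_1, \ldots, S_k)$ where some $S_j = \{x\}$, and applying Lemma~\ref{lem:atmostone} and the ST-set-tree-sequence characterization of $r(\cC)$ (\cite[Corollary 9]{journalElusive}) gives us that removing $x$ drops the reticulation number. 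Combined with Lemma~\ref{lem:atmostone}, which says $r(\cC) - 1 \leq r(\cC \setminus \{x\}) \leq r(\cC)$, we would conclude $r(\cC \setminus \{x\}) = r(\cC) - 1$ provided we can show $r(\cC \setminus \{x\}) \neq r(\cC)$.

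An arguably cleaner route, and the one I would actually write, is to delete $x$ and $r$ directly from $N$: let $N'$ be the result of removing $x$ and its reticulation parent $r$ from $N$ and tidying up in the usual fashion. Then $N'$ represents $\cC \setminus \{x\}$. Since $\cX^t(r) = \{x\}$, after deleting $x$ the node $r$ has become either a reticulation with $\cX^t(r) = \emptyset$ in $N \setminus \{x\}$ or it is removed outright during tidying-up; either way, applying Lemma~\ref{lem:emptyNoGood} to the intermediate network (or just counting edges and vertices during tidying-up) shows that $r(N') \leq r(N) - 1 = r(\cC) - 1$. Hence $r(\cC \setminus \{x\}) \leq r(\cC) - 1$, and the reverse inequality is Lemma~\ref{lem:atmostone}, so equality holds.

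The main obstacle I anticipate is the bookkeeping in the tidying-up step: I need to argue carefully that deleting $x$ and $r$ genuinely drops the reticulation number by at least one rather than leaving a ``vestigial'' reticulation. The clean way around this is to invoke Lemma~\ref{lem:emptyNoGood}: after removing just $x$ (but not yet $r$), the reticulation $r$ satisfies $\cX^t(r) = \emptyset$ in the resulting network $N_0$ on $\cX \setminus \{x\}$ that still represents $\cC \setminus \{x\}$, so Lemma~\ref{lem:emptyNoGood} immediately yields a network on $\cX \setminus \{x\}$ representing $\cC \setminus \{x\}$ with reticulation number strictly less than $r(N_0) \leq r(N) = r(\cC)$. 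That sidesteps the edge-counting entirely, and the rest is just Lemma~\ref{lem:atmostone}.
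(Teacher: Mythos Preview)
Your final ``cleaner route'' is exactly the paper's argument: delete $x$ from $N$ to obtain a network $N'$ on $\cX\setminus\{x\}$ representing $\cC\setminus\{x\}$ in which either $r$ has vanished (the SBR case) or $\cX^{t}(r)=\emptyset$, then invoke Lemma~\ref{lem:emptyNoGood} to get $r(\cC\setminus\{x\})<r(N')\leq r(N)=r(\cC)$, and close with Lemma~\ref{lem:atmostone}. The earlier dismantling discussion in your proposal is unnecessary scaffolding, but the proof you actually commit to matches the paper's.
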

\begin{proof}
If $x$ is an SBR the result is immediate. Otherwise, if $x$ is deleted from $N$, then a network $N'$ is obtained such that $N'$ represents $\cC \setminus \{x\}$ and,
in $N'$, $\cX^{t}(r)=\emptyset$. By Lemma \ref{lem:emptyNoGood} $r(\cC \setminus \{x\}) < r(N')$.  The result follows because $r(N') = r(N) = r(\cC)$.
\end{proof}

For a network $N$, we say that a \emph{switching} of $N$ is obtained by, for each reticulation node, deleting all but one of its
incoming edges. The red subtrees in Figure \ref{fig:display} are switchings. A network $N$ on $\cX$ displays a tree $T$ on $\cX$ if and only if there
is a switching $T_N$ of $N$ such that $T$ can be obtained from $T_N$ by suppressing nodes with indegree and outdegree equal to one (and if necessary
deleting nodes with indegree 0 and outdegree 1). Hence, each switching is the ``image'' in $N$ of some tree displayed by $N$. Indeed, the following
definitions are entirely consistent with the definition of cluster representation given in Section \ref{sec:prelim}. Given a network $N$ and a switching $T_N$ of $N$, we say that an edge $(u,v)$ of $N$ represents a cluster $C$ w.r.t. $T_N$ if $(u,v)$ is an edge of $T_N$ and $C$ is the set of taxa descendants of $v$ in $T_N$. It is natural to define that an edge $(u,v)$ of $N$ represents a cluster $C$ if there exists some switching $T_N$ of $N$ such that $(u,v)$ represents $C$ w.r.t $T_N$.

We say that a cluster $C \in \cC$ is \emph{minimal} if it is a non-singleton cluster such that there does not exist a non-singleton cluster $C' \in \mathcal{C}$ with $C' \subset C$.
\\

\begin{lemma}
\label{lem:goodclus}
Let $\cC = Cl(\cT)$ be a set of clusters on $\cX$, where $\cT = \{T_1,T_2\}$ is a set of two trees on $\cX$ with no
non-trivial common pendant subtrees, and $r(\cC) \geq 1$. There exists a minimal cluster $C \in \cC$ such that, for at least $|C|-1$ of the taxa $x$ in $C$,
$r(\cC \setminus \{ x \} ) = r(\cC) - 1$.
\end{lemma}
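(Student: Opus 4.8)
The plan is to take an optimal binary network $N$ representing $\cC$ with $r(N) = r(\cC)$ and use it to locate a suitable minimal cluster. First I would pick a reticulation $r$ in $N$ that is ``lowest'' in the sense that no reticulation is reachable from $r$ by a directed path (such a reticulation exists by acyclicity, exactly as in the proof of Observation~\ref{obs:SBRexists}); below such an $r$ sits an SBR, whose taxon set is an ST-set and hence a single taxon since $\cC$ is ST-collapsed. More carefully, I would instead look at the set $\cX^t(r)$ of taxa reachable from $r$ by tree paths. By Lemma~\ref{lem:emptyNoGood}, since $r(N) = r(\cC)$, we cannot have $\cX^t(r) = \emptyset$, so $|\cX^t(r)| \geq 1$. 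The idea is that $\cX^t(r)$, together with the ``context'' of $r$ inside $N$, should give rise to a minimal cluster of $\cC$ almost all of whose taxa can be individually removed with a drop in reticulation number.

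The key step is to connect $\cX^t(r)$ to an actual cluster in $\cC$. Consider the tree-path component hanging off $r$: following tree paths from $r$ we reach a subtree $T^*$ (in $N$) whose leaves include all of $\cX^t(r)$ plus possibly some reticulations. Picking any switching of $N$ that routes through $r$ and down through this tree-path region, the edge out of $r$ represents some cluster $C$ with $\cX^t(r) \subseteq C$; and by choosing the switching below each reticulation encountered to route toward a single taxon appropriately, one should be able to arrange that $C$ is exactly $\cX^t(r)$ enlarged only by singletons sitting below the reticulations reachable from $r$. I would then argue that among all such reachable reticulations one can pick a ``lowest'' one $r'$ (no further reticulation below it via tree paths) for which $\cX^t(r')$ is a genuine cluster-candidate, and that $C := \cX^t(r')$ — or rather the cluster represented by the edge below $r'$ when $\cX^t(r')$ consists only of taxa — can be chosen so that $C \in \cC$ and $C$ is minimal, since any non-singleton cluster strictly inside $C$ would force a non-trivial ST-set inside the tree-path region, contradicting ST-collapsedness. (This is essentially the reasoning used in the proof of Lemma~\ref{lem:3r} identifying conditions (1) and (2) on taxa.)

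Once I have such a minimal cluster $C$, the heart of the argument is that for at least $|C|-1$ of its taxa $x$ we get $r(\cC \setminus \{x\}) = r(\cC) - 1$. Here I would use Corollary~\ref{cor:onesandwich}: if $C$ arises as $\cX^t(r')$ — i.e.\ every taxon of $C$ hangs off $r'$ by a tree path — then I want to show that deleting all but one taxon of $C$ from $N$ leaves a network $N'$ representing $\cC \setminus (C \setminus \{x\})$ in which $\cX^t(r') = \{x\}$, so that one more deletion (of $x$) lets Lemma~\ref{lem:emptyNoGood} kick in. More directly: fix one taxon $x_0 \in C$; for each other $x \in C$, deleting $x$ from $N$ yields $N_x$ representing $\cC \setminus \{x\}$ with $r(N_x) \le r(N)$; since $C$ was minimal and $x$ sat on a tree path below $r'$ alongside $\ge 2$ taxa, in $N_x$ the reticulation $r'$ still has $|\cX^t(r')| \ge 1$, and I claim $r(N_x) = r(\cC)$ is impossible to improve upon only if... — more cleanly, iterate: remove $C \setminus \{x\}$ all at once to get a network on $\cX \setminus (C\setminus\{x\})$ in which $r'$ now has $\cX^t(r') = \{x\}$, apply Corollary~\ref{cor:onesandwich} to conclude the reticulation number drops by exactly one when $x$ is also removed, and combine with Lemma~\ref{lem:atmostone} (each single removal drops the reticulation number by at most one) to back-propagate that each of the $|C|-1$ intermediate single removals must have dropped it by exactly one.

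The main obstacle I expect is the bookkeeping in the middle step: making precise that $\cX^t(r')$ (suitably chosen) really is a cluster \emph{in} $\cC$ and really is \emph{minimal}, rather than merely a cluster represented by $N$. This requires carefully choosing the reticulation $r'$ to be lowest among those reachable from the lowest-reticulation $r$, invoking ST-collapsedness to rule out proper non-singleton sub-clusters, and handling the annoying case where some taxa of the tree-path region below $r'$ are separated from $r'$ only by further reticulations (which the ``lowest $r'$'' choice is designed to prevent). The final back-propagation via Lemma~\ref{lem:atmostone} is routine once the structural claim about $r'$ and $C$ is nailed down.
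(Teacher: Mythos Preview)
Your plan has two genuine gaps, both at the points you yourself flag as delicate.

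\textbf{First gap: locating a minimal cluster of $\cC$.} Starting from a lowest reticulation $r'$ gives $|\cX^t(r')|=1$ (that is exactly the SBR argument), so you never obtain a non-singleton cluster this way. If you instead move up to a higher reticulation $r'$, there is no reason whatsoever that $\cX^t(r')$ is a cluster \emph{in} $\cC$; it is merely a set of taxa that some edge of $N$ happens to represent under some switching, and $N$ represents many clusters not in $\cC$. Your suggestion to ``invoke ST-collapsedness to rule out proper non-singleton sub-clusters'' does not help here: ST-collapsedness forbids non-trivial ST-sets, not non-trivial subclusters. The paper resolves this by a completely different choice of ``lowest'' object: it picks a lowest \emph{edge $e=(u,v)$ that represents some non-singleton cluster of $\cC$}, in the sense that no edge reachable from $e$ represents any non-singleton cluster of $\cC$. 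This guarantees $C\in\cC$ by construction, and the ``lowest'' property now lets ST-collapsedness bite: any two taxa $\{y,z\}$ reachable by tree paths from a tree node below $v$ would be separated by some $C^*\in\cC$, and $C^*$ would have to be represented below $e$, contradiction. From this one deduces that every reticulation $r$ below $v$ in the chosen switching has $|\cX^t(r)|=1$.

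\textbf{Second gap: the back-propagation.} Even granting you a minimal cluster $C=\cX^t(r')$ in an optimal $N$, your argument does not establish the individual conclusion $r(\cC\setminus\{x\})=r(\cC)-1$ for $|C|-1$ many $x$. After deleting $C\setminus\{x\}$ from $N$ you obtain a network $N''$ with $\cX^t(r')=\{x\}$, but Corollary~\ref{cor:onesandwich} requires $r(N'')=r(\cC\setminus(C\setminus\{x\}))$, which you do not know. More fundamentally, suppose you could show $r(\cC\setminus C)=r(\cC)-(|C|-1)$; Lemma~\ref{lem:atmostone} then says that along \emph{any} removal order at most one step fails to drop $r$, but the failing step can depend on the order, so you cannot conclude that all but one specific taxon has the single-removal property. (Concretely, nothing in your argument rules out $r(\cC\setminus\{a\})=r(\cC\setminus\{b\})=r(\cC)$ while $r(\cC\setminus\{a,b\})=r(\cC)-1$.) The paper avoids this entirely: once it knows that for each $x\in C$ (with one possible exception) there is a reticulation $r$ in the \emph{original, optimal} $N$ with $\cX^t(r)=\{x\}$, Corollary~\ref{cor:onesandwich} applies directly and separately to each such $x$.
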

\begin{proof}
Let $N$ be a binary network that represents $\cC$ such that $r(\cC) = r(N)$. Let $e=(u,v)$ be an edge of $N$ that represents some non-singleton cluster of $\cC$ such that there does not exist another edge $e'=(u',v')$ reachable from $e$ with this property (where reachable here means: there is a directed path from $v$ to $u'$). Hence
$e$ is a ``lowest'' edge that represents a non-singleton cluster. Let $C \in \cC$ be a non-singleton cluster represented by $e$. We will prove that at least $|C|-1$ taxa $x$ in
$C$ have the property $r(\cC \setminus \{ x \} ) = r(\cC) - 1$. Observe that this property will then automatically also hold for all non-singleton clusters $C' \subset C$, in particular minimal $C'$, from which the claim will follow.

By definition $e = (u,v)$ is an edge of some switching $T_N$ of $N$ such that $C$ is equal to the set of taxa descendants of $v$ in $T_N$. Fix any such $T_N$. Observe
firstly that if there is a directed path in $T_N$ from $v$ to some reticulation $r$, then $\cX^{t}(r) \subseteq C$. The next statement is critical. Suppose there is a tree node $v'$ which is reachable in $T_N$ by a directed path from $v$. Suppose furthermore that, in $T_N$, the set of all taxa $\cX'$ reachable from $v'$ by tree paths (in $T_N$) has cardinality exactly 2. We show that this situation cannot actually happen. To see this, let $\{y,z\}$ be the taxa in $\cX'$. By assumption $\{y, z\}$ is not an ST-set, because
$\cC$ is ST-collapsed. Hence
there must exist a non-singleton cluster $C^{*} \in \cC$ such that without loss of generality $C^{*} \cap \{y,z\} = \{y\}$. Now, $C^{*}$ must be represented
by some edge $e' = (u',v')$ of $N$. Moreover, $e'$ must lie somewhere on the tree path from $v'$ to $y$ in $T_N$. However, $u'$ is then reachable by a directed path
from $v$, contradicting our claim that $e$ was ``lowest''. So such an $\cX'$ does not exist. Now, suppose that $r$ is a reticulation in $T_N$ such that (1) $r$ can
be reached in $T_N$ by a directed path from $v$, (2) two or more taxa can be reached in $T_N$ from $r$ by tree paths. Due to the fact that $N$ is binary, there must
exist a tree node $v'$ reachable in $T_N$ by a tree path from $r$, such that $\{x,y\}$ are the only two taxa reachable from $v'$ by tree paths in $T_N$. We have
already concluded, however, that this is not possible. Hence we can infer that,  if $r$ is a reticulation in $T_N$ such that $r$ can be reached by a directed path
from $v$, $|\cX^t(r)|=1$. This, in turn, means that with one possible exception (because there can be at most one taxon in $C$ reachable in $T_N$ from $v$ by a tree path) each taxon $x \in C$ is such that $\cX^{t}(r) = \{x\}$ for some $r$ i.e. $x$ is either an SBR or is the unique taxon ``sandwiched'' between several reticulations.
By Corollary \ref{cor:onesandwich} we are done.
\end{proof}

An immediate consequence of Lemma \ref{lem:goodclus} is that if we could identify the minimal cluster $C$, it would be sufficient to restrict our attention
to an arbitrary size-2 subset of it: we could still be sure that at least one of the the taxa $x$ is such that $r(\cC \setminus \{ x \} ) = r(\cC) - 1$. This is the motivation behind the following theorem.

\begin{theorem}
\label{thm:bigdog}
Let $\cC = Cl(\cT)$ be a set of clusters on $\cX$, where $\cT = \{T_1,T_2\}$ is a set of two trees on $\cX$ with no
non-trivial common pendant subtrees, and $r(\cC) \geq 1$. Let $\cX' \subseteq \cX$ be the set constructed as follows. If there are strictly more than
$2 \cdot r(\cC)$ terminals in $\cC$, let $\cX'$ be an arbitrary subset of the terminals of cardinality $2 \cdot r(\cC)+1$. Otherwise, for each minimal cluster
$C \in \cC$, put two arbitrary taxa from $C$ in $\cX'$, of which at least one is a terminal. Then $|\cX'| \leq 6 \cdot r(\cC)$ and there
exists $x \in \cX'$ such that $r(\cC \setminus \{ x \} ) = r(\cC) - 1$.
\end{theorem}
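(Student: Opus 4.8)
The plan is to split into the two cases according to the construction of $\cX'$ and handle the cardinality bound and the existence claim separately in each case.

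\medskip

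\noindent\textbf{Cardinality.} First I would verify $|\cX'| \leq 6 \cdot r(\cC)$. In the first case this is immediate: $|\cX'| = 2 \cdot r(\cC) + 1 \leq 6 \cdot r(\cC)$ since $r(\cC) \geq 1$. In the second case we are in the situation where there are at most $2 \cdot r(\cC)$ terminals. I would argue that the number of minimal clusters is at most the number of terminals: every minimal cluster is non-singleton, so by Observation~\ref{obs:poset} it contains at least one terminal; moreover two distinct minimal clusters are disjoint (if they intersected, incomparability under $\subseteq$ of non-singleton clusters would contradict minimality — more carefully, their intersection would have to be empty since neither can contain the other and minimality forbids a smaller non-singleton cluster inside either, though one must check the intersection cannot itself force a smaller cluster; the cleanest route is that minimal clusters are pairwise disjoint because any two incompatible clusters would generate structure, but actually minimal clusters need only be shown pairwise disjoint, which follows since if $C_1 \cap C_2 \neq \emptyset$ and neither contains the other then they are incompatible, yet that alone is fine — the real point is that a terminal in $C_1 \cap C_2$ would lie in two minimal clusters, and we just need an injection, so I only need that each minimal cluster contains a terminal not contained in any other minimal cluster, equivalently that distinct minimal clusters are disjoint). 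So I would pin down: distinct minimal clusters are disjoint (a short argument: a common element $x$ would satisfy, for the witness-type reasons or directly, that $C_1 \cap C_2$ is a non-singleton cluster only if... — actually the simplest is: if $x \in C_1 \cap C_2$ then since $C_1 \not\subseteq C_2$ and $C_2 \not\subseteq C_1$ by minimality, $C_1,C_2$ are incompatible, which is allowed; hence I instead count via terminals directly). The safe statement is: the map sending each minimal cluster to a terminal it contains is injective because minimal clusters are pairwise disjoint, hence there are at most $2 \cdot r(\cC)$ minimal clusters, so $|\cX'| \leq 2 \cdot (2 \cdot r(\cC)) = 4 \cdot r(\cC) \leq 6 \cdot r(\cC)$.

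\medskip

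\noindent\textbf{Existence of a good taxon.} In the first case, $\cX'$ is a set of $2 \cdot r(\cC) + 1$ terminals, so Corollary~\ref{cor:2rPlus1} immediately gives a taxon $x \in \cX'$ with $r(\cC \setminus \{x\}) < r(\cC)$, and by Lemma~\ref{lem:atmostone} this decrease is exactly $1$, so $r(\cC \setminus \{x\}) = r(\cC) - 1$. In the second case, let $C$ be the minimal cluster guaranteed by Lemma~\ref{lem:goodclus}, for which at least $|C| - 1$ of its taxa $x$ satisfy $r(\cC \setminus \{x\}) = r(\cC) - 1$. By construction $\cX'$ contains two taxa of $C$; since at most one taxon of $C$ is "bad", at least one of these two is "good", giving the desired $x \in \cX'$. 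The only subtlety is that Lemma~\ref{lem:goodclus} guarantees the property for \emph{some} minimal cluster, not all, but $\cX'$ was built to include two taxa from \emph{every} minimal cluster, so in particular from that one.

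\medskip

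\noindent The main obstacle is the counting in the second case: one must be careful that "two arbitrary taxa from $C$, of which at least one is a terminal" is always possible, i.e. that every minimal cluster has size at least $2$ (true, it is non-singleton by definition) and contains at least one terminal (true by Observation~\ref{obs:poset}), and that the total count over all minimal clusters stays within $6 \cdot r(\cC)$, which needs the bound on the number of minimal clusters via disjointness and the terminal count. I would state and quickly justify the pairwise-disjointness of minimal clusters as the one genuinely new small lemma needed here.
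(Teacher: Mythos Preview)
Your existence arguments in both cases are correct and match the paper's reasoning (Corollary~\ref{cor:2rPlus1} together with Lemma~\ref{lem:atmostone} for the first case, Lemma~\ref{lem:goodclus} for the second). The cardinality bound in the first case is also fine.

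The genuine gap is in your cardinality argument for the second case: the claim that distinct minimal clusters of $\cC$ are pairwise disjoint is \emph{false}. Take $\cX=\{a,b,c,d\}$, let $T_1$ have cherries $\{a,b\}$ and $\{c,d\}$, and let $T_2$ have cherries $\{a,d\}$ and $\{b,c\}$. One checks easily that $\cC=Cl(T_1)\cup Cl(T_2)$ is ST-collapsed. The minimal clusters of $\cC$ are $\{a,b\},\{c,d\},\{a,d\},\{b,c\}$, and for instance $\{a,b\}\cap\{b,c\}=\{b\}\neq\emptyset$. So your proposed injection from minimal clusters to terminals via disjointness does not go through, and neither does the resulting $4\cdot r(\cC)$ bound (which, being tighter than the theorem's $6\cdot r(\cC)$, should already have looked suspicious).

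What \emph{is} true is the weaker statement that minimal clusters arising from the same tree $T_i$ are pairwise disjoint, because $Cl(T_i)$ is a laminar family. Hence every taxon, and in particular every terminal, lies in at most two minimal clusters of $\cC$ (at most one from each tree). The paper exploits this as follows: map each element of $\cX'$ to a terminal in $\cX'$ by sending terminals to themselves and each non-terminal $y$ to the terminal chosen from the same minimal cluster as $y$. A terminal $t$ then has at most three preimages: $t$ itself, plus at most one non-terminal from each of the at most two minimal clusters containing $t$. With at most $2\cdot r(\cC)$ terminals, this yields $|\cX'|\leq 3\cdot 2\cdot r(\cC)=6\cdot r(\cC)$. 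Replacing your disjointness step with this ``at most two minimal clusters per terminal'' observation repairs the argument.
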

\begin{proof}
The first way of constructing $\cX'$ is correct by Corollary \ref{cor:2rPlus1}. Let us then assume that there are at most $2 \cdot r(\cC)$ terminals. Recall
that each (minimal) cluster contains at least one terminal, by Observation \ref{obs:poset}. A terminal can appear in at most one minimal cluster from $T_1$, and
at most one minimal cluster from $T_2$. Consider the following mapping from $\cX'$ to itself. Map each terminal to itself. For each non-terminal $y \in \cX'$, map $y$ (arbitrarily) to a terminal $x \in \cX'$ such that $x$ and $y$ are both in some minimal cluster of $\cC$. In this mapping, a terminal can be mapped onto at most 3 times (i.e. from itself and at most two non-terminals). Hence $|\cX'| \leq 6 \cdot r(\cC)$.
\end{proof}
 
\section{The algorithm}
\label{sec:alg}

We describe the algorithm non-deterministically to keep the exposition as clear as possible.\\ 
\\
 \textbf{Input:} Two trees $\cT = \{T_1, T_2\}$ on the same set of taxa $\mathcal{X}$.\\
 \textbf{Output:} A network $N$ that displays binary refinements of $T_1$ and $T_2$ such that $r(N)=h(\cT)$.
\begin{algorithm}
\caption{}
\begin{algorithmic}[1]
\STATE set $\cC := Cl(\cT)$
\STATE guess $r = h(\cT) = r(\cC)$ 
\FOR {$i := r$ \textbf{downto} $1$}
 \STATE collapse all maximal ST-sets (i.e. maximal common pendant subtrees) in $\cC$ to obtain a set of clusters $\cC'$
\IF {$\cC'$ contains more than $2i$ terminals}
\STATE {set $\mathcal{X'}$ to be an arbitrary size $2i+1$ subset of the terminals}
\ELSE
\STATE {construct $\cX'$ by taking two taxa from each minimal cluster of $\cC'$, such that at least one of each pair is a terminal}
\ENDIF
\STATE { guess an element $x \in \cX'$ such that $r(\cC' \setminus \{x\}) = r(\cC') - 1$ and record that $x_{r-i+1} := x$ }
\STATE { set $\cC := \cC' \setminus \{x\}$}
\ENDFOR
\STATE convert the sequence $(x_1, \ldots, x_r)$ into the ST-set tree sequence $\mathcal{S} = (S_1, \ldots, S_r)$ of $\cC$ by decollapsing taxa
\STATE use $\mathcal{S}$ to construct a binary network $N$ with $r(N) = h(\cT)$ that displays binary refinements of $T_1$ and $T_2$ (see Lemma \ref{lem:unified}).
\end{algorithmic}
\end{algorithm}
\\
The correctness of the algorithm is primarily a consequence of Lemma \ref{lem:goodclus} and Corollary \ref{cor:2rPlus1}. If we let $r = r(\cC)$, the running time is at most $(6^r r!) \cdot r \cdot poly(n)$ where $n = |\cX|$. The single $r$ term comes from line 2. The $(6^r r!)$ term is a consequence of Theorem \ref{thm:bigdog}; $|\cX'|$ never rises above $6r$, and each iteration of the main loop is assumed to reduce the reticulation number by 1, giving a running time of at most
$(6r)(6(r-1))(6(r-2))\ldots = 6^r r!$. The $poly(n)$ term includes operations such as computing terminals, locating minimal clusters and collapsing maximal ST-sets; the first two operations are clearly polynomial-time because $\cC(\cT) \leq 4(n-1)$ (which follows from the fact that a tree on $n$ taxa contains at most $2(n-1)$ edges). In fact, the most time-consuming operation inside the $poly(n)$ term is collapsing maximal ST-sets (i.e. maximal common pendant subtrees).  In \cite[Lemma 5]{journalElusive} a naive $O(n^4)$ algorithm is given for this although with intelligent use of data structures and exploiting the fact that $\cC$ comes from two trees $O(n^2)$ is certainly possible without too much effort. Finally, we note that the single $r$ term can be absorbed, if necessary, into the $poly(n)$ term to give $(6^r r!) \cdot poly(n)$, because (trivially) $r \leq n$.

\section{Acknowledgements}
We gratefully acknowledge Jean Derks and Nela Lekic for their helpful comments concerning an earlier version of this article. We also thank Simone Linz and
Leo van Iersel for useful discussions.

\bibliography{6r2012}

\end{document}